\newtheorem{observation}{Observation}
\newtheorem{question}{Question}
\title{Hardness and approximation for the geodetic set problem in some graph classes} 
\author{Dibyayan Chakraborty}{Indian Statistical Institute, Kolkata, India.}{dibyayancg@gmail.com}{}{}
\author{Florent Foucaud}{Univ. Orl\'eans, INSA Centre Val de Loire, LIFO EA 4022, F-45067 Orl\'eans Cedex 2, France.} {florent.foucaud@gmail.com}{}{Partially supported by the ANR project HOSIGRA (ANR-17-CE40-0022).}
\author{Harmender Gahlawat}{Indian Statistical Institute, Kolkata, India.} {harmendergahlawat@gmail.com}{}{}
\author{Subir Kumar Ghosh}{Ramakrishna Mission Vivekananda Educational and Research Institute.}{subir.ghosh@rkmvu.ac.in}{}{}
\author{Bodhayan Roy}{Indian Institute of Technology, Kharagpur.}{bodhayan.roy@gmail.com}{}{}
\authorrunning{Chakraborty et al.}
\keywords{Planar graph, Geodetic Set, Approximation}
\begin{document}

\maketitle

\begin{abstract}
In this paper, we study the computational complexity of finding the \emph{geodetic number} of graphs. A set of vertices $S$ of a graph $G$ is a \emph{geodetic set} if any vertex of $G$ lies in some shortest path between some pair of vertices from $S$. The \textsc{Minimum Geodetic Set (MGS)} problem is to find a geodetic set with minimum cardinality. In this paper, we prove that solving the \textsc{MGS} problem is NP-hard on planar graphs with a maximum degree six and line graphs. We also show that unless $P=NP$, there is no polynomial time algorithm to solve the \textsc{MGS} problem with sublogarithmic approximation factor (in terms of the number of vertices) even on graphs with diameter $2$. On the positive side, we give an $O\left(\sqrt[3]{n}\log n\right)$-approximation algorithm for the \textsc{MGS} problem on general graphs of order $n$. We also give a $3$-approximation algorithm for the \textsc{MGS} problem on the family of solid grid graphs which is a subclass of planar graphs.
\end{abstract}
\section{Introduction and results}\label{sec:intro}

Suppose there is a city-road network (i.e. a graph) and a bus company wants to open bus terminals in some of the cities. The buses will go from one bus terminal to another (i.e. from one city to another) following the shortest route in the network. Finding the minimum number of bus terminals required so that any city belongs to some shortest route between some pair of bus terminals is equivalent to finding the \emph{geodetic number} of the corresponding graph. Formally, an undirected simple graph $G$ has vertex set $V(G)$ and edge set $E(G)$. For two vertices $u,v\in V(G)$, let $I(u,v)$ denote the set of all vertices in $G$ that lie in some shortest path between $u$ and $v$. A set of vertices $S$ is a \emph{geodetic set} if $\cup_{u,v\in S} I(u,v)=V(G)$. The \emph{geodetic number}, denoted as $g(G)$, is the minimum integer $k$ such that $G$ has a geodetic set of cardinality $k$. Given a graph $G$, the \textsc{Minimum Geodetic Set (MGS)} problem is to compute a geodetic set of $G$ with minimum cardinality. In this paper, we shall study the computational complexity of the \textsc{MGS} problem in various graph classes.

The notion of geodetic sets and geodetic number was introduced by Harary et al.~\cite{harary1993}. The notion of geodetic number is closely related to convexity and convex hulls in graphs, which have applications in game theory, facility location, information retrieval, distributed computing and communication networks~\cite{buckley1985,haynes2003,gerstel1994,mitchell1978,ekim2014}. In 2002, Atici~\cite{atici2002} proved that finding the geodetic number of arbitrary graphs is NP-hard. Later, Dourado et al.~\cite{dourado2008,dourado2010} strengthened the above result to \emph{bipartite} graphs, \emph{chordal} graphs and \emph{chordal bipartite} graphs. Recently, Bueno et al.~\cite{bueno2018} proved that the \textsc{MGS} problem remains NP-hard even for \emph{subcubic} graphs. On the positive side, polynomial time algorithms to solve the \textsc{MGS} problem are known for \emph{cographs}~\cite{dourado2010}, \emph{split} graphs~\cite{dourado2010}, \emph{ptolemaic} graphs~\cite{farber1986}, \emph{outer planar} graphs~\cite{mezzini2018} and \emph{proper interval} graphs~\cite{ekim2012}. In this paper, we prove the following theorem.

\begin{theorem}\label{thm:planar}
	The \textsc{MGS} problem is NP-hard for planar graphs of maximum degree $6$.
\end{theorem}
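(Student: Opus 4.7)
The plan is to reduce from Planar 3-SAT, or more specifically from a restricted version such as Planar 3-SAT where the variable--clause incidence graph is planar and each variable appears in at most three clauses. For each formula $\Phi$, I would construct a planar graph $G_\Phi$ of maximum degree $6$ together with an integer $k$ such that $\Phi$ is satisfiable if and only if $g(G_\Phi) \le k$.

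The construction would rest on three kinds of gadgets. The first ingredient is \emph{forced vertices}: it is a standard observation that every simplicial vertex (in particular every degree-one vertex) must belong to every geodetic set, since such a vertex cannot lie in the interior of any shortest path between two other vertices. Attaching pendant vertices or small cliques at prescribed locations therefore pins down a fixed baseline of vertices that any geodetic set must contain. The second ingredient is a \emph{variable gadget} for each variable $x$, consisting of two symmetric ``sides'' such that the shortest paths between its forced vertices cover the gadget entirely only if exactly one additional ``true'' or ``false'' selector vertex is included, thereby encoding a truth assignment. The third is a \emph{clause gadget} for each clause $C = (\ell_1 \vee \ell_2 \vee \ell_3)$, joined to the three corresponding variable gadgets and containing a few vertices whose coverage by shortest paths is guaranteed only when at least one of the three literals is set to true. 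Planarity of $G_\Phi$ is then inherited from a planar embedding of $\Phi$'s incidence graph.

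The main obstacle will be the global, non-local nature of shortest paths: adding the clause--variable connections can create unintended short paths that bypass the variable gadgets and so break the intended forced/free distinction. I would address this by inserting long ``padding paths'' on every variable-to-clause connection, and carefully tuning their lengths so that the only shortest paths between selector and forced vertices are the intended ones, and so that each clause gadget is internally covered only via its three attached literal-selectors. A secondary obstacle is the degree bound of $6$: since a clause gadget touches three variables and a variable gadget can be attached to three clauses, the attachment points must be split into several vertices (for instance via short internal paths or small binary-tree structures) so that no vertex exceeds degree $6$, while still preserving the distances that underlie the correctness argument.
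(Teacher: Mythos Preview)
Your proposal is a plan, not a proof: you name the ingredients (forced simplicial vertices, variable gadgets with true/false selectors, clause gadgets, padding paths) but you never actually build any of them or verify that they behave as claimed. The entire difficulty of reductions to \textsc{MGS} lies precisely in the step you flag as ``the main obstacle'' and then defer: controlling \emph{all} shortest paths in the final graph so that coverage of the critical vertices corresponds exactly to satisfaction of the clause. Saying that you will ``insert long padding paths'' and ``carefully tune their lengths'' is not an argument; once three clause connections leave a variable gadget, shortest paths between forced vertices of distant gadgets can route in many ways, and you give no mechanism that confines them. Without an explicit construction and a distance analysis, there is nothing here one could check.

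The paper avoids these global-distance headaches entirely by reducing from \textsc{Minimum Dominating Set} on planar subcubic graphs rather than from a SAT variant. Each vertex $v$ of $G$ is replaced by a fixed $13$-vertex gadget $G_v$ with three degree-one vertices $z^v_{i,j}$ (forced into every geodetic set), a centre $c^v$, and a boundary $6$-cycle; adjacent gadgets are linked along that boundary. The key structural fact is local: shortest paths between $z$-vertices of the same or neighbouring gadgets run only along the boundary cycles, so the centres $c^v$ and the $x^v_{i,j}$ vertices are covered if and only if some $c^w$ with $w\in N[v]$ is added to the geodetic set. This turns ``which extra vertices do we add'' directly into ``which vertices form a dominating set'', giving the clean equivalence $g(f(G)) = 3|V(G)| + \gamma(G)$. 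The domination-based route buys exactly what your SAT sketch lacks: all relevant shortest paths are short (length at most~$6$) and stay inside one or two gadgets, so no global padding or length-tuning is needed, and the degree bound of~$6$ falls out of the gadget automatically.
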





Then we focus on \emph{line} graphs. Given a graph $G$, the \emph{line} graph of $G$, denoted by $L(G)$ is a graph such that each vertex of $L(G)$ represents an edge of $G$ and two vertices of $L(G)$ are adjacent if and only if their corresponding edges share a common endpoint in $G$. A graph $H$ is a \emph{line graph} if $H\cong L(G)$ for some $G$. Some optimisation problems which are difficult to solve in general graphs admit polynomial time algorithms when the input is a line graph~\cite{gerber2003,guruswami1999}. We prove the following theorem.

\begin{theorem}\label{thm:line-graph}
	The \textsc{MGS} problem is NP-hard for line graphs.
\end{theorem}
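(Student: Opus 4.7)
The plan is to construct a polynomial-time reduction from a known NP-hard problem to MGS on line graphs. Given an instance $I$ of the source problem (for example $3$-SAT, Vertex Cover, or even the MGS problem on a restricted class already known to be hard, such as subcubic or bipartite graphs), I build a graph $G_I$ and argue that $g(L(G_I))$ encodes the optimum of $I$.

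The reduction rests on two elementary observations about line graphs that I would establish first. \emph{Distance formula:} for two distinct edges $e=uv$ and $f=xy$ of any graph $H$,
\[
d_{L(H)}(e,f) \;=\; 1 + \min\{d_H(a,b):\; a\in\{u,v\},\ b\in\{x,y\}\},
\]
and consequently a shortest $e$--$f$ path in $L(H)$ corresponds to a shortest walk in $H$ that begins by traversing $e$ and ends by traversing $f$; a vertex of $L(H)$ lies on this path iff the corresponding edge of $H$ is used by such an extremal walk. \emph{Forced vertices:} every pendant edge of $H$ is a simplicial vertex of $L(H)$, since its $L(H)$-neighbourhood is the clique formed by the other edges incident to its non-leaf endpoint, and simplicial vertices must appear in every geodetic set. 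Thus the pendant edges of $G_I$ furnish a pool $P$ of vertices of $L(G_I)$ that are forced into any geodetic set, acting as a rigid skeleton that the rest of the construction can exploit.

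Concretely, $G_I$ would consist of variable or element gadgets, each offering a local binary or discrete choice, together with clause or constraint gadgets containing critical edges whose coverage in $L(G_I)$ by shortest paths through $P$ depends on the selected choices being consistent with a valid solution of $I$. One then establishes $g(L(G_I)) = |P| + f(\textsc{opt}(I))$ in both directions: a solution of $I$ yields an explicit geodetic set of that size by adjoining specific ``choice'' edges to $P$, and any geodetic set of that size can be decoded into a solution of $I$ by tracking which critical edges force which choices. The main obstacle, and where most of the proof effort will concentrate, is the geometric control of shortest paths in $L(G_I)$: because $L(H)$-distances are walk distances in $H$, an unintended shortcut through one gadget could cover edges in a distant gadget and spoil the encoding. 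Preventing this typically requires inserting sufficiently long stretch paths between gadgets so that shortest paths among the forced pendants are constrained to use only the intended subgraphs. Once this rigidity is in place, the equivalence between geodetic sets of $L(G_I)$ and solutions of $I$ reduces to a local case analysis on each gadget.
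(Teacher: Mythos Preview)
Your two preliminary observations (the distance formula for $L(H)$ and that pendant edges of $H$ are simplicial in $L(H)$) are correct and are exactly the right starting point. But what follows is a template, not a proof: you never commit to a source problem, never describe a single gadget, and never verify the equivalence $g(L(G_I)) = |P| + f(\textsc{opt}(I))$. The sentence ``Once this rigidity is in place, the equivalence \ldots\ reduces to a local case analysis'' is precisely where the entire difficulty of such a reduction lives, and it is asserted rather than done. As written this is a plan for a proof, with the hard step (controlling which edges of $G_I$ lie on shortest $L(G_I)$-paths between forced pendants) left open.

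The paper's argument is structurally quite different from the variable/clause gadget scheme you sketch, and in particular it avoids the ``unintended shortcut'' problem rather than fighting it with long stretch paths. It works entirely in the edge world: a set $S\subseteq E(H)$ is a \emph{line geodetic set} of $H$ iff it corresponds to a geodetic set of $L(H)$. The paper then introduces a restricted version, a \emph{good edge set}, where every edge must be covered by a pair from $S$ at edge-distance $2$ or $3$, and shows this is NP-hard on triangle-free graphs via a short reduction from \textsc{Edge Dominating Set} (attach a pendant $P_2$ to every vertex). Finally, from a triangle-free $G$ it builds $H_G$ by adding two new vertices $b,c$ adjacent to all of $V(G)$ together with pendant edges $ab$ and $cd$; in $H_G$ every edge has edge-eccentricity at most $3$, the pendants $ab,cd$ are forced, and after a normalisation lemma any minimum line geodetic set of $H_G$ is exactly $\{ab,cd\}$ plus a minimum good edge set of $G$.

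So the contrast is: you propose to \emph{separate} gadgets with long paths to tame shortest-path interference, while the paper \emph{collapses} the edge-diameter to at most $3$ so that the covering condition degenerates to a local (domination-type) condition, for which NP-hardness is already available. Your route could in principle be made to work, but you would still owe the reader an explicit gadget design and the full shortest-path analysis; the paper's route bypasses that analysis entirely.
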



From a result of Dourado et al.~\cite{dourado2010}, it follows that solving the \textsc{MGS} problem is NP-hard even for graphs with diameter at most $4$. On the other hand, solving the \textsc{MGS} problem on graphs with diameter $1$ is trivial (since those are exactly complete graphs). In this paper, we prove that unless P=NP, there is no polynomial time algorithm with sublogarithmic approximation factor for the \textsc{MGS} problem even on graphs with diameter at most $2$. A \emph{universal vertex} of a graph is adjacent to all other vertices of the graph. We shall prove the following stronger theorem. 

\begin{theorem}\label{thm:reduction-diam2}
	Unless P=NP, there is no polynomial time $o(\log n)$-approximation algorithm for the \textsc{MGS} problem even on graphs that have a universal vertex, where $n$ is the number of vertices in the input graph.
\end{theorem}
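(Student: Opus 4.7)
The plan is to give an approximation-preserving reduction from Minimum Set Cover, for which Dinur and Steurer proved $o(\log N)$-inapproximability under $P \neq NP$, where $N$ is the size of the universe $U$. Given an instance with $U = \{u_1, \ldots, u_N\}$ and $\mathcal{S} = \{S_1, \ldots, S_M\}$, I would build a graph $G$ whose vertex set is $\{w, a, b\} \cup \{e_i : 1 \leq i \leq N\} \cup \{s_j, s_j' : 1 \leq j \leq M\}$, of total size $O(N + M)$. Make $w$ adjacent to every other vertex (so $w$ is a universal vertex and $G$ has diameter $2$); make $a$ and $b$ each adjacent to every $s_j$ and $s_j'$ with $a \not\sim b$ and $a, b \not\sim e_i$; and put $s_j \sim e_i$ and $s_j' \sim e_i$ if and only if $u_i \in S_j$. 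The elements are pairwise non-adjacent, and all set-type vertices are pairwise non-adjacent (including each twin pair $s_j, s_j'$).

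Writing $k$ for the minimum set cover size, I would first show $g(G) \leq 2k + 2$. Starting from any optimal set cover $\mathcal{C}$, the set $T = \{a, b\} \cup \bigcup_{S_j \in \mathcal{C}} \{s_j, s_j'\}$ is a geodetic set: the non-adjacent pair $(a, b)$ has common neighborhood $\{w\} \cup \{s_j, s_j' : 1 \leq j \leq M\}$, so $I(a, b)$ covers $w$ and every set-type vertex; and for each $u_i$, picking some $S_j \in \mathcal{C}$ with $u_i \in S_j$, the element vertex $e_i$ is a common neighbor of the non-adjacent twins $s_j, s_j'$, so $e_i \in I(s_j, s_j')$.

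Second, I would show $k \leq g(G)$ by extracting a set cover of size at most $|S|$ from any geodetic set $S$. Let $\mathcal{C} = \{S_j : s_j \in S \text{ or } s_j' \in S\} \cup \{S_{\phi(i)} : e_i \in S\}$, where $\phi(i)$ is any index with $u_i \in S_{\phi(i)}$; clearly $|\mathcal{C}| \leq |S|$. To see that $\mathcal{C}$ covers $U$, fix $u_i \in U$: if $e_i \in S$ then $S_{\phi(i)}$ covers $u_i$; otherwise, $e_i$ must be the middle vertex of a length-$2$ shortest path $a' - e_i - b'$ with $a', b' \in S$, $a' \not\sim b'$, both in $N(e_i) = \{w\} \cup \{s_j, s_j' : u_i \in S_j\}$. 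Since $w$ is adjacent to every vertex of $G$, both $a'$ and $b'$ must be set-type vertices, and their indices (which lie in $\mathcal{C}$ by construction) correspond to sets containing $u_i$.

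Combining the two bounds yields $k \leq g(G) \leq 2k + 2$, so any $\alpha(n)$-approximation for MGS on $G$ converts into an $O(\alpha(n))$-approximation for Set Cover. Since $|V(G)| = O(N + M)$ and the Dinur--Steurer inapproximability already applies to Set Cover instances with $M = \mathrm{poly}(N)$, we have $o(\log n) = o(\log N)$, so an $o(\log n)$-approximation of MGS would violate the Set Cover inapproximability. The main technical point is the twin-pair design of the set-vertices: inside $N(e_i)$, the universal vertex $w$ is useless as one endpoint of a non-adjacent pair (it is adjacent to everything), so the only non-adjacent pairs available there are pairs of set-type vertices, forcing any geodetic set to contain, in effect, a cover of $U$ by set-vertex indices; the auxiliary vertices $\{a, b\}$ then handle the otherwise-uncovered set-type vertices and $w$ at a constant additive cost.
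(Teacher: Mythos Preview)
Your reduction is correct, and the argument goes through: the twin-pair design does exactly what you claim, the sandwich $k \le g(G) \le 2k+2$ holds, and the extraction of a set cover from an arbitrary geodetic set is sound (the crucial point being that neither endpoint of a non-adjacent pair in $N(e_i)$ can be the universal vertex).

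However, you take a genuinely different route from the paper. The paper reduces from \textsc{$2$-Dominating Set} on triangle-free graphs (for which the cited $o(\log n)$-inapproximability is known) via an almost trivial construction: given a triangle-free graph $H$, simply add one universal vertex $v$ to obtain $H'$, and observe that $S$ is a geodetic set of $H'$ if and only if $S \setminus \{v\}$ is a $2$-dominating set of $H$. This gives an \emph{exact} correspondence between the two optimum values (up to $\pm 1$), with essentially no gadgetry. Your approach instead reduces directly from \textsc{Set Cover}, building an explicit bipartite incidence graph with twin set-vertices and two anchor vertices $a,b$ before adding the universal vertex; this yields only the sandwich $k \le g(G) \le 2k+2$, which still suffices for the inapproximability claim. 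In effect, your graph $G-w$ is a particular triangle-free (in fact bipartite) instance to which the paper's lemma would apply, but you analyze it from scratch rather than invoke the $2$-domination viewpoint. The paper's proof is shorter and more conceptual; yours is more self-contained, since it bypasses the intermediate hardness result for \textsc{$2$-MDS} and goes straight back to \textsc{Set Cover}.
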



On the positive side, we show that a reduction to the \textsc{Minimum  Rainbow Subgraph of Multigraph} problem (defined in Section~\ref{sec:approx-general}) gives the first sublinear approximation algorithm for the \textsc{MGS} problem on general graphs. 

\begin{theorem}\label{thm:general-approx}
	Given a graph, there is a polynomial-time $O(\sqrt[3]{n}\log n)$-approximation algorithm for the \textsc{MGS} problem where $n$ is the number of vertices.
\end{theorem}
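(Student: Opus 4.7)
The plan is to reduce MGS to the Minimum Rainbow Subgraph of Multigraph (MRSM) problem and invoke a known approximation algorithm for MRSM as a black box. Given $G$ on $n$ vertices, I would build an edge-coloured multigraph $H$ with $V(H)=V(G)$ and one colour $c_w$ per vertex $w\in V(G)$ as follows: for every triple $(u,v,w)$ with $w\in I(u,v)$, insert an edge $\{u,v\}$ in $H$ with colour $c_w$ (self-loops $\{w,w\}$ of colour $c_w$ are included, since trivially $w\in I(w,w)$). Using all-pairs shortest paths in $G$, this construction runs in polynomial time and produces an $n$-vertex, $n$-colour multigraph with at most $n^3$ edges.

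Next I would establish the equivalence of optima. By construction, a vertex subset $S\subseteq V(H)$ has the property that the multigraph induced by $S$ uses every colour if and only if for every $w\in V(G)$ there exist $u,v\in S$ with $w\in I(u,v)$; that is, if and only if $S$ is a geodetic set in $G$. Hence the minimum number of vertices of a rainbow subgraph of $H$ equals $g(G)$, so any polynomial-time $\alpha$-approximation algorithm for MRSM yields, via this reduction, an $\alpha$-approximation algorithm for MGS on $G$.

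Finally, I would plug in an existing $O(\sqrt[3]{n}\log n)$-approximation for MRSM (on multigraphs with up to $n$ vertices and $n$ colours). Applied to $H$, this algorithm returns a vertex set of size at most $O(\sqrt[3]{n}\log n)\cdot g(G)$, which, interpreted back in $G$, is a geodetic set of the required approximation quality.

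The main obstacle will be choosing and adapting the MRSM approximation result: the analysis in the literature must cope with the features of $H$ that arise naturally here, namely self-loops and large numbers of parallel edges sharing a colour (since many pairs $(u,v)$ can witness the same $w\in I(u,v)$). If the MRSM algorithm I cite assumes a simple underlying graph or a bound on colour multiplicities, I would either preprocess $H$ (keeping only one representative edge per colour-pair) or slightly modify the analysis to accommodate the multigraph. A secondary point to verify is that the MRSM formulation measures cost by vertices of the chosen subgraph (as opposed to edges), matching the MGS cost function; this is the version I would need and is the standard one in the approximation literature for this problem.
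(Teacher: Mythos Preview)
Your proposal is correct and matches the paper's approach essentially verbatim: build the edge-coloured multigraph on $V(G)$ with one colour per vertex and an edge $\{u,v\}$ of colour $w$ whenever $w\in I(u,v)$, observe that rainbow subgraphs correspond exactly to geodetic sets, and invoke the $O(\sqrt[3]{n}\log n)$-approximation of Tirodkar and Vishwanathan for MRSM. The paper handles your ``main obstacle'' with a one-line remark that their result, though stated for simple graphs, extends to multigraphs; your suggested preprocessing (keeping one representative edge per colour/endpoint-pair) would also suffice.
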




Then we focus on \emph{solid grid} graphs, an interesting subclass of planar graphs. A \emph{grid embedding} of a graph is a collection of points with integer coordinates such that each point in the collection represents a vertex of the graph and two points are at a distance one if and only if the vertices they represent are adjacent in the graph. A graph is a \emph{grid} graph if it has a grid embedding. A  graph is a \emph{solid grid} graph if it has a grid embedding such that all interior faces have unit area. Approximation algorithms for optimisation problems like \textsc{Longest path, Longest Cycle, Node-Disjoint Path} etc. on grid graphs and solid grid graphs have been studied~\cite{cualinescu2008,itai1982,chuzhoy2015,wu20107,sardroud2016,zhang2011}. In this paper, we prove the following theorem.

\begin{theorem}\label{thm:grid-approx}
	Given a solid grid graph, there is an $O(n)$ time $3$-approximation algorithm for the \textsc{MGS} problem, even if the grid embedding is not given as part of the input. Here $n$ is the number of vertices in the input graph.
\end{theorem}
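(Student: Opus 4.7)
The plan is to design a linear-time algorithm returning a canonical set $C$ of ``corner'' vertices of the underlying polyomino, prove that $C$ is always a geodetic set, and establish a matching lower bound $|S| \geq |C|/3$ for any geodetic set $S$ via a charging argument. Since a solid grid graph is in bijection with a simply-connected polyomino whose coordinates can be reconstructed from the adjacency structure in $O(n)$ time (by standard polyomino recognition), we may freely argue in terms of the grid embedding. The algorithm outputs $C := \{v \in V(G) : \deg(v) = 2\}$, requiring only a single pass over the graph. A short structural lemma shows that in a solid grid graph the degree-$2$ vertices coincide with the \emph{convex corners} of the polyomino---degree-$2$ vertices whose two neighbors are perpendicular and share a common second neighbor forming a unit cell---because if the two neighbors were collinear, the incident edges would bound no unit cell, contradicting the solid property.

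For the upper bound, I would show that every vertex $v \in V(G)$ lies on a shortest path between two convex corners. The construction uses a pair of monotone walks from $v$: one greedily moves towards decreasing coordinates (down or left whenever possible) and terminates at a vertex $a$ with no remaining down/left neighbor, while the symmetric walk moves towards increasing coordinates and terminates at $b$. Each terminal vertex has degree at most two with perpendicular neighbors, hence is a convex corner. Their concatenation is a monotone lattice path from $a$ through $v$ to $b$ of length equal to the Manhattan distance $d(a,b)$, and is therefore a shortest $a$--$b$ path.

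For the lower bound, I would set up a charging scheme assigning each convex corner to a vertex of $S$ with bounded total charge per vertex. A convex corner $v$ with perpendicular neighbors $u_1, u_2$ is covered either because $v \in S$ (charged to itself) or because there exist $s, t \in S$ with $v$ internal on a shortest $s$-$t$ path, in which case the local rigidity at $v$ forces $s$ to lie ``in the $u_1$ direction'' and $t$ ``in the $u_2$ direction'' from $v$; we then charge $v$ to the closer of $s, t$. The main obstacle will be bounding the charge received by a fixed vertex $w \in S$: a case analysis using the fact that only three of the four quadrants emanating from a convex corner can lie inside the polyomino shows $w$ can be ``responsible'' for at most three such corners, giving $|C| \leq 3|S|$. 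Once this local bound is established, the $3$-approximation ratio and the linear time complexity follow directly, even in polyominos with many reflex corners.
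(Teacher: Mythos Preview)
Your proposal has several genuine gaps that break the argument.

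\textbf{The set $C=\{v:\deg(v)=2\}$ need not be a geodetic set.} Solid grid graphs may contain vertices of degree~$1$ (a path $P_n$, or a square with a pendant edge, are solid grid graphs), and every geodetic set must contain all such vertices. Your $C$ omits them, so $C$ is not geodetic in these cases. Worse, for the path $P_n$ we have $g(P_n)=2$ while $|C|=n-2$, so the claimed ratio $|C|\le 3\,g(G)$ fails without bound. The paper avoids this by taking as ``corner vertices'' the degree-$1$ vertices together with the endpoints of \emph{corner paths} (maximal straight boundary segments with degree-$2$ ends, degree-$3$ interior, and no cut vertices), which is a strictly smaller and better-behaved set than all degree-$2$ vertices.

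\textbf{Your structural lemma is false.} You assert that in a solid grid graph every degree-$2$ vertex has perpendicular neighbours, arguing that collinear neighbours would give an edge bounding no unit cell. But the definition of ``solid'' only requires that every \emph{interior} face be a unit square; edges lying on bridges or tree-like parts bound no interior face at all, and this is permitted. The middle vertex of $P_3$ is a counterexample.

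\textbf{The embedding cannot be recovered in linear time.} Recognising solid grid graphs is NP-complete (see the remark after Theorem~\ref{thm:grid-approx} in the paper), and hence computing an embedding from the abstract graph is NP-hard: if one could produce an embedding in polynomial time on yes-instances, one could decide membership by checking the output. The paper therefore does \emph{not} reconstruct coordinates; instead it detects corner vertices purely combinatorially by a local walk along the boundary using only degree information.

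\textbf{The lower bound is the real content and is only sketched.} Even restricted to the biconnected case (where your $C$ coincides with the paper's corner vertices and your monotone-walk upper bound is a pleasant alternative to the paper's rectangular-block decomposition), the statement that any $w\in S$ is charged by at most three convex corners is exactly the heart of the matter, and ``a case analysis'' is not a proof. The paper obtains the factor~$3$ differently: it shows (Lemma~\ref{lem:cardinality}) that every geodetic set must contain a vertex from each corner path, and then counts that each corner path has two endpoints and each endpoint lies on at most two corner paths, yielding $|C(G)|\le 3\,g(G)$.
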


Note that recognising solid grid graphs is NP-complete~\cite{gregori1989}.

\medskip\noindent\textbf{Organisation of the paper:} In Section~\ref{sec:hard}, we prove the hardness results for planar graphs, line graphs and graphs with diameter $2$. In Section~\ref{sec:approx}, we present our approximation algorithms. Finally we draw our conclusions in Section~\ref{sec:conclude}.

\section{Hardness results}\label{sec:hard}

In Section~\ref{sec:planar-hard}, we prove that the \textsc{MGS} problem is NP-hard for planar graphs with maximum degree $6$ (Theorem~\ref{thm:planar}). Then in Section~\ref{sec:line-hard} we prove that the \textsc{MGS} problem is NP-hard for line graphs (Theorem~\ref{thm:line-graph}). In Section~\ref{sec:approx-hard} we prove the inapproximability result (Theorem~\ref{thm:reduction-diam2}).

\subsection{NP-hardness on planar graphs}\label{sec:planar-hard}

Given a graph $G$, a subset $S\subseteq V(G)$ is a dominating set of $G$ if any vertex in $V(G)\setminus S$ has a neighbour in $S$. The problem \textsc{Minimum Dominating set (MDS)} consists in computing a dominating set of an input graph $G$ with minimum cardinality. To prove Theorem~\ref{thm:planar}, we shall reduce the NP-complete \textsc{MDS} problem on subcubic planar graphs~\cite{garey2002} to the \textsc{MGS} problem on planar graphs with maximum degree $6$.

Let us describe the reduction. From a subcubic planar graph $G$ with a given planar embedding, we construct a graph $f(G)$ as follows. Each vertex $v$ of $G$ will be replaced by a \emph{vertex gadget} $G_v$. This vertex gadget has vertex set $\{c^v,t^v_{0},t^v_{1},t^v_{2}\}\cup\{x_{i,j}^v,y_{i,j}^v,z_{i,j}^v~|~0\leq i<j\leq 2\}$. There are no other vertices in $f(G)$. For the edges within $G_v$, vertex $t_i^v$ (for $0\leq i\leq 2$) is adjacent to vertices $c^v$, $x_{i,i+1}^v$, $y_{i,i+1}^v$, $x_{i-1,i}^v$, $y_{i-1,i}^v$ (indices taken modulo~$3$). Moreover, for each pair $i,j$ with $0\leq i<j\leq 2$, $x_{i,j}^v$ is adjacent to $c^v$ and $y_{i,j}^v$, and $y_{i,j}^v$ is adjacent to $z_{i,j}^v$. We now describe the edges outside of the vertex-gadgets. They will depend on the embedding of $G$. We assume that the edges incident with any vertex $v$ are labeled $e_i^v$ with $0\leq i<deg_G(v)$, in such a way that the numbering increases counterclockwise around $v$ with respect to the embedding (thus the edge $vw$ will have two labels: $e_i^v$ and $e_j^w$). Consider two vertices $v$ and $w$ that are adjacent in $G$, and let $e_i^v$ and $e_j^w$ be the two labels of edge $vw$ in $G$. Then, $t_i^v$ is adjacent to $t_j^w$, $y_{i,i+1}^v$ is adjacent to $y_{j-1,j}^w$ and $y_{i-1,i}^v$ is adjacent to $y_{j+1,j}^w$ (indices are taken modulo the degree of the original vertex of $G$). It is clear that a planar embedding of $f(G)$ can easily be obtained from the planar embedding of $G$. Thus $f(G)$ is planar and has maximum degree $6$. The construction is depicted in Figure~\ref{fig:planar-reduction}, where $v$ and $w$ are adjacent in $G$ and the edge $vw$ is labeled $e_0^v$ and $e_0^w$.

\begin{figure}[!htpb]
	\centering
	\scalebox{0.7}{\begin{tikzpicture}[scale=1]
		\node[circle, draw=black!100,fill=black!100,thick, inner sep=0pt, minimum size=1.5mm,label=above right:{$c^v$}] (c^v) at (0,0) {};
		\node[circle, draw=black!100,fill=black!100,thick, inner sep=0pt, minimum size=1.5mm,label=above right:{$t_0^v$}] (t_0^v) at (0:2.5) {};
		\node[circle, draw=black!100,fill=black!100,thick, inner sep=0pt, minimum size=1.5mm,label=left:{$x_{0,1}^v$}] (x_{0,1}^v) at (60:1) {};
		\node[circle, draw=black!100,fill=black!100,thick, inner sep=0pt, minimum size=1.5mm,label=above left:{$y_{0,1}^v$}] (y_{0,1}^v) at (60:2) {};
		\node[circle, draw=black!100,fill=black!100,thick, inner sep=0pt, minimum size=1.5mm,label=right:{$z_{0,1}^v$}] (z_{0,1}^v) at (60:3) {};
		\node[circle, draw=black!100,fill=black!100,thick, inner sep=0pt, minimum size=1.5mm,label=left:{$t_1^v$}] (t_1^v) at (120:2.5) {};
		\node[circle, draw=black!100,fill=black!100,thick, inner sep=0pt, minimum size=1.5mm,label=above left:{$x_{1,2}^v$}] (x_{1,2}^v) at (180:1) {};
		\node[circle, draw=black!100,fill=black!100,thick, inner sep=0pt, minimum size=1.5mm,label=above left:{$y_{1,2}^v$}] (y_{1,2}^v) at (180:2) {};
		\node[circle, draw=black!100,fill=black!100,thick, inner sep=0pt, minimum size=1.5mm,label=below:{$z_{1,2}^v$}] (z_{1,2}^v) at (180:3) {};
		\node[circle, draw=black!100,fill=black!100,thick, inner sep=0pt, minimum size=1.5mm,label=left:{$t_2^v$}] (t_2^v) at (240:2.5) {};
		\node[circle, draw=black!100,fill=black!100,thick, inner sep=0pt, minimum size=1.5mm,label=left:{$x_{0,2}^v$}] (x_{0,2}^v) at (300:1) {};
		\node[circle, draw=black!100,fill=black!100,thick, inner sep=0pt, minimum size=1.5mm,label=below left:{$y_{0,2}^v$}] (y_{0,2}^v) at (300:2) {};
		\node[circle, draw=black!100,fill=black!100,thick, inner sep=0pt, minimum size=1.5mm,label=right:{$z_{0,2}^v$}] (z_{0,2}^v) at (300:3) {};
		\draw (t_0^v)--(c^v)--(x_{0,1}^v)--(y_{0,1}^v)--(z_{0,1}^v);
		\draw (t_1^v)--(c^v)--(x_{1,2}^v)--(y_{1,2}^v)--(z_{1,2}^v);
		\draw (t_2^v)--(c^v)--(x_{0,2}^v)--(y_{0,2}^v)--(z_{0,2}^v);
		\draw (t_0^v)--(x_{0,1}^v)--(t_1^v)--(x_{1,2}^v)--(t_2^v)--(x_{0,2}^v)--(t_0^v)--(y_{0,1}^v)--(t_1^v)--(y_{1,2}^v)--(t_2^v)--(y_{0,2}^v)--(t_0^v);
		\draw[dashed] (y_{1,2}^v) -- +(-1,-2);
		\draw[dashed] (y_{1,2}^v) -- +(-1,2);
		\draw[dashed] (y_{0,2}^v) -- +(-0.5,-1);
		\draw[dashed] (t_2^v) -- +(-0.25,-0.5);
		\draw[dashed] (y_{0,1}^v) -- +(-0.5,1);
		\draw[dashed] (t_1^v) -- +(-0.25,0.5);
		
		\node[xshift=8cm,circle, draw=black!100,fill=black!100,thick, inner sep=0pt, minimum size=1.5mm,label=above left:{$c^w$}] (c^w) at (0,0) {};
		\node[xshift=8cm,circle, draw=black!100,fill=black!100,thick, inner sep=0pt, minimum size=1.5mm,label=above left:{$t_0^w$}] (t_0^w) at (180:2.5) {};
		\node[xshift=8cm,circle, draw=black!100,fill=black!100,thick, inner sep=0pt, minimum size=1.5mm,label=right:{$x_{0,1}^w$}] (x_{0,1}^w) at (240:1) {};
		\node[xshift=8cm,circle, draw=black!100,fill=black!100,thick, inner sep=0pt, minimum size=1.5mm,label=below right:{$y_{0,1}^w$}] (y_{0,1}^w) at (240:2) {};
		\node[xshift=8cm,circle, draw=black!100,fill=black!100,thick, inner sep=0pt, minimum size=1.5mm,label=left:{$z_{0,1}^w$}] (z_{0,1}^w) at (240:3) {};
		\node[xshift=8cm,circle, draw=black!100,fill=black!100,thick, inner sep=0pt, minimum size=1.5mm,label=right:{$t_1^w$}] (t_1^w) at (300:2.5) {};
		\node[xshift=8cm,circle, draw=black!100,fill=black!100,thick, inner sep=0pt, minimum size=1.5mm,label=above right:{$x_{1,2}^w$}] (x_{1,2}^w) at (0:1) {};
		\node[xshift=8cm,circle, draw=black!100,fill=black!100,thick, inner sep=0pt, minimum size=1.5mm,label=above right:{$y_{1,2}^w$}] (y_{1,2}^w) at (0:2) {};
		\node[xshift=8cm,circle, draw=black!100,fill=black!100,thick, inner sep=0pt, minimum size=1.5mm,label=below:{$z_{1,2}^w$}] (z_{1,2}^w) at (0:3) {};
		\node[xshift=8cm,circle, draw=black!100,fill=black!100,thick, inner sep=0pt, minimum size=1.5mm,label=right:{$t_2^w$}] (t_2^w) at (60:2.5) {};
		\node[xshift=8cm,circle, draw=black!100,fill=black!100,thick, inner sep=0pt, minimum size=1.5mm,label=right:{$x_{0,2}^w$}] (x_{0,2}^w) at (120:1) {};
		\node[xshift=8cm,circle, draw=black!100,fill=black!100,thick, inner sep=0pt, minimum size=1.5mm,label=above right:{$y_{0,2}^w$}] (y_{0,2}^w) at (120:2) {};
		\node[xshift=8cm,circle, draw=black!100,fill=black!100,thick, inner sep=0pt, minimum size=1.5mm,label=left:{$z_{0,2}^w$}] (z_{0,2}^w) at (120:3) {};
		\draw (t_0^w)--(c^w)--(x_{0,1}^w)--(y_{0,1}^w)--(z_{0,1}^w);
		\draw (t_1^w)--(c^w)--(x_{1,2}^w)--(y_{1,2}^w)--(z_{1,2}^w);
		\draw (t_2^w)--(c^w)--(x_{0,2}^w)--(y_{0,2}^w)--(z_{0,2}^w);
		\draw (t_0^w)--(x_{0,1}^w)--(t_1^w)--(x_{1,2}^w)--(t_2^w)--(x_{0,2}^w)--(t_0^w)--(y_{0,1}^w)--(t_1^w)--(y_{1,2}^w)--(t_2^w)--(y_{0,2}^w)--(t_0^w);
		\draw[dashed] (y_{1,2}^w) -- +(1,-2);
		\draw[dashed] (y_{1,2}^w) -- +(1,2);
		\draw[dashed] (y_{0,1}^w) -- +(0.5,-1);
		\draw[dashed] (t_1^w) -- +(0.25,-0.5);
		\draw[dashed] (y_{0,2}^w) -- +(0.5,1);
		\draw[dashed] (t_2^w) -- +(0.25,0.5);
		
		\draw (t_0^v)--(t_0^w) (y_{0,1}^v)--(y_{0,2}^w) (y_{0,2}^v)--(y_{0,1}^w);
		\end{tikzpicture}}
	\caption{Illustration of the reduction used in the proof of Theorem~\ref{thm:planar}. Here, two vertex gadgets $G_v$, $G_w$ are depicted, with $v$ and $w$ adjacent in $G$. Dashed lines represent potential edges to other vertex-gadgets.}\label{fig:planar-reduction}
\end{figure}
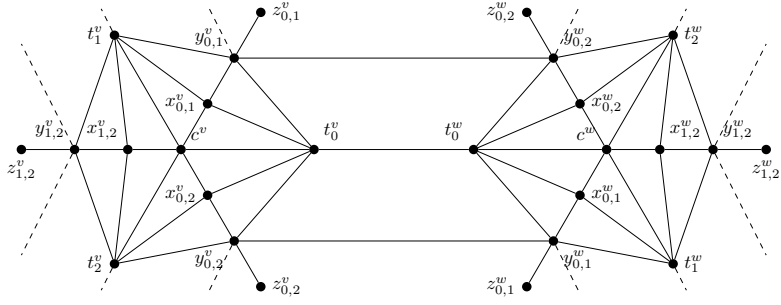

We will show that $G$ has a dominating set of size $k$ if and only if $f(G)$ has a geodetic set of size $3|V(G)|+k$.

Assume first that $G$ has a dominating set $D$ of size~$k$. We construct a geodetic set $S$ of $f(G)$ of size $3|V(G)|+k$ as follows. For each vertex $v$ in $G$, we add the three vertices $z_{i,j}^v$ ($0\leq i<j\leq 2$) of $G_v$ to $S$. If $v$ is in $D$, we also add vertex $c^v$ to $S$.

Let us show that $S$ is indeed a geodetic set. First, we observe that, in any vertex gadget $G_v$ that is part of $f(G)$, the unique shortest path between two distinct vertices $z_{i,j}^v$, $z_{i',j'}^v$ has length~$4$ and goes through vertices $y_{i,j}^v$, $t_{k}^v$ and $y_{i',j'}^v$ (where $\{k\}=\{i,j\}\cap\{i',j'\}$). Thus, it only remains to show that vertices $c^v$ and $x_{i,j}^v$ ($0\leq i<j\leq 2$) belong to some shortest path of vertices of $S$. Assume that $v$ is a vertex of $G$ in $D$. The shortest paths between $c^v$ and $z_{i,j}^v$ have length~$3$ and one of them goes through vertex $x_{i,j}^v$. Thus, all vertices of $G_v$ belong to some shortest path between vertices of $S$. Now, consider a vertex $w$ of $G$ adjacent to $v$ and let $z_{i,j}^w$ be the vertex of $G_w$ that is farthest from $c_v$. The shortest paths between $c^v$ and $z_{i,j}^w$ have length~$6$; one of them goes through vertices $c^w$ and $x_{i,j}^w$; two others go through the two other vertices $x_{i',j'}^w$ and $x_{i'',j''}^w$. Thus, $S$ is a geodetic set.

For the converse, assume we have a geodetic set $S'$ of $f(G)$ of size $3|V(G)|+k$. We will show that $G$ has a dominating set of size~$k$. First of all, observe that all the $3|V(G)|$ vertices of type $z_{i,j}^v$ are necessarily in $S'$, since they have degree $1$. As observed earlier, the shortest paths between those vertices already go through all vertices of type $t_i^v$ and $y_{i,j}^v$. However, no other vertex lies on a shortest path between two such vertices: these shortest paths always go through the boundary $6$-cycle of the vertex-gadgets. Let $S'_0$ be the set of the remaining $k$ vertices of $S'$. These vertices are there to cover the vertices of type $c^v$ and $x_{i,j}^v$. We construct a subset $D'$ of $V(G)$ as follows: $D'$ contains those vertices $v$ of $G$ whose vertex-gadget $G_v$ contains a vertex of $S'_0$. We claim that $D'$ is a dominating set of $G$. Suppose by contradiction that there is a vertex $v$ of $G$ such that neither $G_v$ nor any of $G_w$ (with $w$ adjacent to $v$ in $G$) contains any vertex of $S'_0$. Here also, the shortest paths between vertices of $S$ always go through the boundary $6$-cycle of $G_v$ and thus, they never include vertex $c_v$, a contradiction. Thus, $D'$ is a dominating set of size $k$, and we are done.

\subsection{NP-hardness on line graphs}\label{sec:line-hard}

In this section, we prove that the \textsc{MGS} problem remains NP-hard on line graphs. For a graph $G$ and edges $e,e'\in E(G)$, define $d(e,e')=1$ if $e,e'$ shares a vertex and $d(e,e')=i$ if $e'$ shares a vertex with an edge $e''$ with $d(e,e'')=i-1$. A path between two edges $e,e'$ is defined in the usual way. 

\begin{observation}\label{obs:line-1}
	A path between two edges $e,e'$ of a graph $G$ corresponds to a path between the vertices $e $ and $e'$ in $L(G)$.
\end{observation}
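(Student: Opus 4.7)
The plan is to prove the observation by unpacking the definitions on both sides and exhibiting a bijective correspondence between the two notions of paths. By the definition introduced just before the statement, a path between two edges $e$ and $e'$ in $G$ is a sequence $e = e_0, e_1, \ldots, e_k = e'$ of edges of $G$ such that consecutive edges $e_{i-1}$ and $e_i$ share a common endpoint in $G$. By the definition of the line graph $L(G)$, two vertices of $L(G)$ (each corresponding to an edge of $G$) are adjacent precisely when their corresponding edges share an endpoint in $G$.

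For the forward direction, I would take a path of edges $e = e_0, e_1, \ldots, e_k = e'$ in $G$ and observe that, for each $1 \leq i \leq k$, the vertices $e_{i-1}$ and $e_i$ of $L(G)$ are adjacent by the adjacency rule of the line graph. Hence $e = e_0, e_1, \ldots, e_k = e'$ is a walk between the vertices $e$ and $e'$ in $L(G)$, which contains a path between them. For the converse direction, I would start with a path $e = v_0, v_1, \ldots, v_k = e'$ in $L(G)$ and translate each vertex $v_i$ back to the edge of $G$ it represents; the adjacency of $v_{i-1}$ and $v_i$ in $L(G)$ then guarantees that the corresponding edges of $G$ share an endpoint, giving a path between the edges $e$ and $e'$ in $G$ in the sense of the preceding definition.

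Since both directions amount to rewriting the same sequence under the vertex-edge correspondence defining $L(G)$, there is no real obstacle here; the statement is essentially tautological and serves to set up the subsequent argument, where distances between edges in $G$ may be freely replaced by distances between the corresponding vertices in $L(G)$. Because the observation is presented as a tool for later proofs, I would keep the write-up very brief, simply recording the two translations above and concluding that paths between edges in $G$ and paths between the corresponding vertices in $L(G)$ are the same objects.
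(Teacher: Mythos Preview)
Your proposal is correct and matches the paper's treatment: the paper gives no proof of this observation at all, treating it as immediate from the definitions of edge-paths and of the line graph. Your write-up simply makes explicit the translation the paper leaves implicit, so there is nothing to add.
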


Given a graph $G$, a set $S\subseteq E(G)$ is a \emph{line geodetic set} of $G$ if every edge $e\in E(G)\setminus S$ belongs to some shortest path between some pair of edges $\{e,e'\}\subseteq S$. Observation~\ref{obs:line-1} implies the following. 

\begin{observation}\label{obs:line-2}
	A graph $G$ has a line geodetic set of cardinality $k$ if and only if $L(G)$ has a geodetic set of size $k$.
\end{observation}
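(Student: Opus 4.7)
The plan is to exploit the natural bijection $\varphi : E(G) \to V(L(G))$ that sends each edge $e$ of $G$ to the vertex of $L(G)$ it represents, and argue that this bijection preserves the relevant distance structure, after which the claim becomes a direct translation between the two definitions.

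First, I would strengthen Observation~\ref{obs:line-1} into a distance-preserving statement: for any two edges $e,e' \in E(G)$, the value $d(e,e')$ (as defined above the observation) equals the graph distance $d_{L(G)}(\varphi(e),\varphi(e'))$ in the line graph. Observation~\ref{obs:line-1} already gives one direction, namely that a length-$\ell$ path of edges $e=e_0,e_1,\dots,e_\ell=e'$ in $G$ (consecutive edges sharing a vertex) yields a path $\varphi(e_0)\varphi(e_1)\cdots\varphi(e_\ell)$ in $L(G)$. Conversely, any path in $L(G)$ from $\varphi(e)$ to $\varphi(e')$ is, by the very definition of $L(G)$, a sequence of vertices whose corresponding edges in $G$ are pairwise consecutive, i.e.\ an edge-path of the same length in $G$. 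Taking shortest such sequences on either side yields the equality of distances.

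Next, I would observe that this equality of distances immediately upgrades to an equality of \emph{shortest-path memberships}: an edge $e'' \in E(G)$ lies on some shortest path between $e$ and $e'$ (in the sense of the definition preceding Observation~\ref{obs:line-1}) if and only if $d(e,e'') + d(e'',e') = d(e,e')$, and the analogous identity in $L(G)$ characterises when $\varphi(e'')$ lies on a shortest path between $\varphi(e)$ and $\varphi(e')$. By the distance-preservation established above, the two conditions coincide.

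Finally, I would conclude the biconditional as a direct translation. Given a line geodetic set $S \subseteq E(G)$ of size $k$, each edge $e'' \in E(G)\setminus S$ lies on some shortest path between a pair in $S$, so $\varphi(e'')$ lies on some shortest path between the corresponding pair in $\varphi(S)$; together with the fact that $\varphi(S)$ itself covers $\varphi(S)$, this makes $\varphi(S)$ a geodetic set of $L(G)$ of size $|S|=k$. The reverse direction is identical, using $\varphi^{-1}$. I do not anticipate a genuine obstacle here, since once distances are shown to transfer through $\varphi$ the argument is a definitional chase; the only point that deserves care is handling the reverse direction of Observation~\ref{obs:line-1}, which the paper does not state explicitly but which follows from the definition of $L(G)$.
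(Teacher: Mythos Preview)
Your proposal is correct and follows the same approach as the paper, which simply states that Observation~\ref{obs:line-2} is implied by Observation~\ref{obs:line-1} without further elaboration. Your write-up merely makes explicit the distance-preservation through the canonical bijection $\varphi:E(G)\to V(L(G))$ that the paper leaves to the reader.
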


We shall show (in Lemma~\ref{lem:edge-cover}) that finding a line geodetic set of a graph with minimum cardinality is NP-hard. Then Observation~\ref{obs:line-2} shall imply that solving the \textsc{MGS} problem on line graphs is NP-hard. For the above purpose we need the following definition. Given a graph $G$, a set $S\subseteq E(G)$ is a \emph{good edge set} if for any edge $e\in E(G)\setminus S$, there are two edges $e',e''\in S$ such that (i) $e$ lies in some shortest path between $e'$ and $e''$, and (ii) $d(e',e'')$ is $2$ or $3$.

\begin{figure}[t]
	\centering
	\scalebox{0.7}{
		\begin{tabular}{cc}
			\includegraphics[page=6]{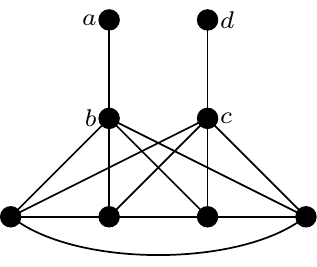} & \includegraphics[page=1]{figures.pdf} \\
			(a) & (b)
	\end{tabular}}
	\caption{ (a) A triangle-free graph $G$ and (b) the graph $H_G$.}    \label{fig:gadget-2}
\end{figure}


\begin{lemma}\label{lem:goodset}
	Computing a good edge set of a triangle-free graph with minimum cardinality is NP-hard. 
\end{lemma}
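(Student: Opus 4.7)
The plan is to reduce from a known NP-hard covering problem on a triangle-free instance --- most naturally \textsc{Minimum Dominating Set} on bipartite graphs, which are automatically triangle-free and for which MDS remains NP-hard --- to the minimum good edge set problem on a triangle-free graph $H_G$ built via a gadget construction along the lines suggested by Figure~\ref{fig:gadget-2}.

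The first step is a forcing argument: any pendant edge of $H_G$ must belong to every good edge set, because no shortest edge-path between two other edges can pass through a pendant edge (any edge sharing a vertex with the pendant must share it at the non-leaf endpoint). Attaching enough pendant or otherwise-forced edges to each vertex/edge of $G$ contributes a fixed base cost $B$ to every good edge set. The gadget is then designed so that each vertex $v$ of $G$ has an associated \emph{selector} edge $s_v$ that may or may not be added to $S$, and each original edge $vw$ of $G$ is replaced by a short connector (a path whose length is chosen so that its internal edges lie at edge-distance exactly $2$ or $3$ from $s_v$ and from $s_w$). Triangle-freeness is preserved by using connectors of length $\geq 2$ rather than direct incidences, which is also what forces the covering to happen via paths of the exact lengths allowed by the definition of a good edge set.

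The correctness proof will have the usual two directions. Given a dominating set $D$ of $G$, take $S$ to be the forced edges together with $\{s_v : v \in D\}$, and verify that every non-$S$ edge of $H_G$ lies on a shortest edge-path of length $2$ or $3$ between two edges of $S$ (using domination to route through a selected neighbour). Conversely, given a good edge set $S$ of $H_G$ of size $B+k$, first apply a normalisation step that replaces any non-selector chosen edge inside a vertex gadget $X_v$ by $s_v$ without increasing $|S|$, and then show that $D = \{v : s_v \in S\}$ must dominate $G$: otherwise some connector edge incident with a non-dominated vertex of $G$ would have no pair of chosen edges at distance $2$ or $3$ containing it on a shortest edge-path.

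The main obstacle will be designing the gadget so that triangle-freeness is preserved while the distance-$2$-or-$3$ covering condition exactly mirrors graph-theoretic domination. In particular, one must prevent unintended global shortcuts --- for example, two distant selectors $s_v, s_{v'}$ cooperating through a chain of connectors to cover a connector edge they should not be allowed to cover --- and also carefully handle the asymmetric roles of distance $2$ and distance $3$ in the definition of a good edge set. Tuning the connector length (and the number and placement of forced pendants inside each gadget) so that for any two selectors $s_v, s_{v'}$ with $vv' \notin E(G)$ one has $d(s_v, s_{v'}) \geq 4$, while keeping $d(s_v, s_{v'}) \in \{2,3\}$ whenever $vv' \in E(G)$, is the delicate part that the triangle-free bipartite structure of the source instance is used to enforce.
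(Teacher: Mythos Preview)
Your proposal is an outline of a strategy, not a proof: the entire content of the reduction---the gadget $X_v$, the selector edge $s_v$, the connectors, the forced pendants---is left unspecified, and you yourself flag the construction as ``the delicate part''. Without an explicit $H_G$ and a verification of both directions there is nothing to check. There is also a concrete danger in the route you chose: if $s_v$ is a pendant edge it can never lie on a shortest edge-path between two other edges, so every selector would be forced and the reduction collapses; if instead $s_v$ is internal to a richer gadget, the forced pendants you add tend to cover $s_v$ automatically via a length-$2$ edge-path, again destroying the link to domination. Getting a gadget that avoids both failure modes while keeping non-adjacent selectors at edge-distance $\ge 4$ is exactly the work you have not done.

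The paper sidesteps all of this by reducing from \textsc{Edge Dominating Set} on triangle-free graphs rather than from vertex domination. Since ``good edge set'' is an edge-covering notion with a distance-$2$-or-$3$ constraint, edge domination is the natural source problem: one simply attaches a pendant path $v\,x_v\,y_v$ of length two to every vertex of $G$. The leaf edges $x_vy_v$ are forced, and an edge $e=uv\in E(G)$ is covered precisely when some chosen edge of $G$ shares an endpoint with $e$ (then $e$ sits on a shortest edge-path of length $2$ or $3$ from that chosen edge to $x_uy_u$ or $x_vy_v$). No selector gadget, no connector tuning, no normalisation step on the converse side beyond stripping the forced leaves. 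If you want to salvage your approach, switching the source problem to \textsc{Edge Dominating Set} would make the gadget design almost trivial.
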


\begin{proof}
	We shall reduce the NP-complete \textsc{Edge Dominating set} problem on triangle-free graphs~\cite{yannakakis1980} to the problem of computing a good edge set of a graph with minimum cardinality on triangle-free graphs. Given a graph $G$, a set $S\subseteq E(G)$ is an \emph{edge dominating set} of $G$ if any edge $e\in E(G)\setminus S$ shares a vertex with some edge in $S$. The \textsc{Edge Dominating Set} problem is to compute an edge dominating set of $G$ with minimum cardinality. 
	
	Let $G$ be a triangle-free graph. For each vertex $v\in V(G)$, take a new edge $x_vy_v$. Construct a graph $G^*$ whose vertex set is the union of $V(G)$ and the set $\{x_v,y_v\}_{v\in V(G)}$ and $E(G^*)=E(G) \cup \{vx_v\}_{v\in V(G)} \cup \{x_vy_v\}_{v\in V(G)}$. Notice that $G^*$ is a triangle-free graph and we shall show that $G$ has an edge dominating set of cardinality $k$ if and only if $G^*$ has a good edge set of cardinality $k+n$ where $n=|V(G)|$.

	Let $S$ be an edge dominating set of $G$. For each $v\in G$, let $H_v$ be written as $x_v,y_v,z_v$. Notice that the set $S\cup \{x_vy_v\}_{v\in V(G)}$ forms a good edge set of $G^*$ and has cardinality $k+n$. Let $S'$ be a good edge set of $G^*$ of size at most $k+n$. Notice that for each $v\in V(G)$, $S'$ must contain the edge $x_vy_v$. Hence, the cardinality of the set $S'\cap E(G)$ is at most $k$. Moreover, for each $e\in E(G^*)\cap E(G)$, there is an edge $e'\in S'$ which is at distance $2$ from $e$. As $S'$ is a good edge set of $G^*$, any edge in $E(G)\setminus S'$ shares a vertex with some edge of $S'$. Hence $S'\cap E(G)$ is an edge dominating set of $G$ of cardinality at most $k$.
\end{proof}

For a triangle-free graph $G$, let $H_G$ be the graph with $V(H_G)=V(G)\cup \{a,b,c,d\}$ and $E(H_G)=E(G)\cup \{ab,cd\}\cup  E'$ where $E'=\{bv\}_{v\in V(G)}\cup \{cv\}_{v\in V(G)}$. See Figure~\ref{fig:gadget-2}(a) and Figure~\ref{fig:gadget-2}(b) for an example. We prove the following proposition.

\begin{lemma}\label{lem:line-1}
	For a triangle-free graph $G$, there is a line geodetic set $Q$ of $H_G$ with minimum cardinality such that $Q\cap E'=\emptyset$.
\end{lemma}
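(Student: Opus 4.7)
The plan is an exchange argument: start with any minimum line geodetic set $Q$ of $H_G$ and trade each edge of $Q \cap E'$ for an edge outside $E'$ without changing $|Q|$.

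First I would establish the anchor $\{ab, cd\} \subseteq Q$. In $L(H_G)$ the neighbourhood of $ab$ is exactly $\{bu : u \in V(G)\}$ (because $a$ has degree one in $H_G$), and any two such neighbours are already adjacent via vertex $b$. So if $ab$ lay on a shortest path between two other edges, the step through $ab$ could be shortcut; this forces $ab \in Q$, and by symmetry $cd \in Q$. Moreover no edge of $H_G$ has one endpoint in $\{a, b\}$ and another in $\{c, d\}$, so $d(ab, cd) = 3$ and every shortest $(ab, cd)$-path has the form $ab - bv - cv - cd$ with $v \in V(G)$. Consequently the pair $(ab, cd)$ already witnesses coverage of every edge of $E'$.

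Next I would pin down the role of a generic $bv \in Q \cap E'$. A quick distance computation shows that every edge of $H_G$ lies at line-graph distance at most $2$ from $bv$, so a pair $(bv, e')$ can only witness the coverage of an edge $f$ that is a common neighbour of $bv$ and $e'$ in $L(H_G)$. The neighbours of $bv$ split as $\{ab, cv\} \cup \{bu : u \ne v\} \cup \{vu : vu \in E(G)\}$; the first three families already sit in $\{ab\} \cup E'$ and are covered by $(ab, cd)$. Hence $bv$ is useful in $Q$ only for witnessing coverage of edges $vu \in E(G)$ incident to $v$ in $G$.

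The exchange step: if $v$ is isolated in $G$ then $bv$ contributes nothing, so $Q \setminus \{bv\}$ would contradict minimality; otherwise pick any neighbour $w$ of $v$ in $G$ and set $Q' = (Q \setminus \{bv\}) \cup \{vw\}$. For each $vu \in E(G)$ previously witnessed by $(bv, e')$, the witness $e'$ must contain $u$ but neither $b$ nor $v$, so $e' \in \{cu\} \cup \{uy : uy \in E(G),\ y \ne v\}$. I would then check that $(vw, e')$ witnesses $vu$ in $Q'$; this reduces to ruling out $e' = uw$, and this is exactly where triangle-freeness of $G$ enters, since $vu, vw \in E(G)$ forces $uw \notin E(G)$. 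The boundary case $u = w$ is trivial because then $vu = vw \in Q'$, and $bv$ itself is covered by $(ab, cd)$ via $ab - bv - cv - cd$. The same argument incidentally shows that if $vw$ were already in $Q$ then $Q \setminus \{bv\}$ would be a smaller line geodetic set, contradicting minimality, so $vw \notin Q$ and $|Q'| = |Q|$ with $|Q' \cap E'| < |Q \cap E'|$. Iterating the swap, together with the symmetric one for edges of the form $cv$, yields the required minimum line geodetic set disjoint from $E'$. The only delicate point is verifying that the swap preserves coverage, and triangle-freeness of $G$ is exactly what makes this go through.
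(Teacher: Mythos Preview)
Your proposal is correct and follows essentially the same exchange argument as the paper: both show $\{ab,cd\}\subseteq Q$ covers all of $E'$, that an edge of $E'\cap Q$ can only be useful for covering edges of $E(G)$ incident to its $V(G)$-endpoint, and then swap it for such an incident edge, with triangle-freeness guaranteeing the new pair still has line-graph distance~$2$. The only cosmetic difference is that the paper replaces $bv$ by one of the edges $f_j$ it was actively covering (so $f_j\notin Q$ automatically), whereas you pick an arbitrary incident edge $vw$ and use minimality to rule out $vw\in Q$.
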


\begin{proof}
	For a set $S\subseteq E(H_G)$, an edge $f\in S$ \emph{covers} an edge $e\in E(H_G)$, if there is another edge $f'\in S$ such that $e$ lies in the shortest path between $f$ and $f'$. Notice that the edges $\{ab,cd\}$ lie in any line geodetic set of $H_G$ and all edges in $E'$ are covered by $ab$ and $cd$. First we prove the following claims.

	\begin{claim}
		Let $Q$ be a line geodetic set of $H_G$ and $e\in E'\cap Q$. If $e$ does not cover any edge of $E(G)$, then $Q\setminus \{e\}$ is a line geodetic set of $G^*$. 
	\end{claim}

	\medskip
	\noindent The proof of the above claim follows from the fact that all edges in $E'\cup \{ab,cd\}$ are covered by $ab$ and $cd$.

	\begin{claim}
		Let $Q$ be a line geodetic set of $H_G$ and $e\in E'\cap Q$. There is another edge $e'\in E(G)\setminus Q$ such that $(Q\cup \{e'\})\setminus \{e\}$ is a line geodetic set of $H_G$. 
	\end{claim}
	
	\medskip
	\noindent
	To prove the claim above, first we define the \emph{ecentricity} of an edge $e\in E(H_G)$ to be the maximum shortest path distance between $e$ and any other edge in $E(H_G)$. Notice that the ecentricity of any edge in $E'$ is two and the ecentricity of any edge of $E(G)$ in $H_G$ is at most three. Now remove all edges from $E'\cap Q$ which do not cover any edge of $E(G)$. By Claim 1, the resulting set, say $Q'$, is a line geodetic set of $H_G$. Let $e$ be an edge $Q'\cap E'$ and let $\{f_1,f_2,\ldots,f_k\}\subseteq E(G)\setminus Q'$ be the set of edges covered by $e$. Since the ecentricity of $e$ is two, there must exist $e_1,e_2,\ldots,e_k$ in $Q'$ such that $f_i$ has a common endpoint with both $e$ and $e_i$ for each $i\in\{1,2,\ldots,k\}$. Therefore the distance between $e$ and $e_i$ is two for each $i\in\{1,2,\ldots,k\}$. As $G$ is triangle-free, $e_i\neq e_j$ for any $i,j\in \{1,2,\ldots,k\}$. Choose any edge $f_j\in \{f_1,f_2,\ldots,f_k\}$. Observe that the distance between $f_j$ and $e_i$ is two when $i\neq j$. Therefore, for each $i\in \{i,2,\ldots,j-1,j+1,\ldots k\}$, the edge $f_i$ lies in the shortest path between $f_j$ and $e_i$. Therefore, $(Q'\cup \{f_j\})\setminus \{e\}$ is a line geodetic set of $H_G$.

	\medskip
	Given any line geodetic set $P$ of $H_G$, we can use the arguments used in Claim 1 and Claim 2 repeatedly on $P$ to construct a line geodetic set $Q$ of $H_G$ such that $|Q|\leq |P|$ and $Q\cap E'=\emptyset$. Thus we have the proof.  
\end{proof}

\begin{lemma}\label{lem:edge-cover}
	Computing a line geodetic set of a graph with minimum cardinality is NP-hard.
\end{lemma}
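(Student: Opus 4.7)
The plan is to reduce the minimum good edge set problem on triangle-free graphs (NP-hard by Lemma~\ref{lem:goodset}) to the minimum line geodetic set problem via the construction $H_G$. I would prove that a triangle-free graph $G$ has a good edge set of size $k$ if and only if $H_G$ has a line geodetic set of size $k+2$; combined with Observation~\ref{obs:line-2}, this yields the theorem. A key preliminary observation is that every line geodetic set of $H_G$ must contain both $ab$ and $cd$: since $a$ has degree~$1$ in $H_G$, the $L(H_G)$-neighbours of $ab$ are exactly the edges $\{bv:v\in V(G)\}$, each pair of which shares the vertex $b$, so no shortest path in $L(H_G)$ between two other edges can pass through $ab$ internally (any such path would admit a length-one shortcut). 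The argument for $cd$ is symmetric. Lemma~\ref{lem:line-1} then lets me assume that a minimum line geodetic set has the form $\{ab,cd\}\cup S$ with $S\subseteq E(G)$, so the equivalence reduces to whether $|S|=k$.

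For the forward direction, given a good edge set $S$ of $G$ of size $k$, I would verify that $\{ab,cd\}\cup S$ is a line geodetic set of $H_G$. A short computation shows $d(ab,cd)=3$ in $H_G$ with the only shortest paths being of the form $ab\!-\!bv\!-\!cv\!-\!cd$; these cover every edge of $E'$. For any $e\in E(G)\setminus S$, the good-edge-set property supplies $e',e''\in S$ with $e$ lying on a shortest path of edge-distance $2$ or $3$ between $e'$ and $e''$ in $G$, and the same edge sequence remains a shortest path in $H_G$ provided one verifies that shortcuts through $b$ or $c$ cannot decrease the edge-distance between two $E(G)$-edges.

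For the converse, assume $\{ab,cd\}\cup S$ is a line geodetic set of $H_G$ with $|S|=k$. Any $e\in E(G)\setminus S$ must lie on some shortest path in $H_G$ between two members of this set, and I would first rule out pairs involving $ab$ or $cd$, because every shortest path from $ab$ (resp.\ $cd$) to any other edge has interior entirely in $E'$, so no $e\in E(G)$ can be an internal vertex of such a path. Hence $e$ is witnessed by a pair $(e',e'')\in S\times S$, and the crux is to show that a shortest path in $H_G$ of edge-distance $2$ or $3$ whose endpoints both lie in $E(G)$ is also a shortest path in $G$ of the same length; once this holds, $S$ is a good edge set of $G$. The main obstacle is precisely this distance-preservation step: for length~$2$ one must check that a common $L(H_G)$-neighbour of two vertex-disjoint $E(G)$-edges cannot lie in $E'\cup\{ab,cd\}$, since any such edge would force the two $E(G)$-edges to share a vertex in $G$ and contradict disjointness; for length~$3$ a case analysis exploiting triangle-freeness rules out internal $E'$-edges on a shortest path without producing a strictly shorter path in $G$. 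Once this distance-preservation step is established, the rest of the argument is routine bookkeeping.
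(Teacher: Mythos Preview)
Your proposal is correct and follows essentially the same route as the paper: reduce from minimum good edge set on triangle-free graphs via $H_G$, use Lemma~\ref{lem:line-1} to normalise a minimum line geodetic set to the form $\{ab,cd\}\cup S$ with $S\subseteq E(G)$, and then verify the size-$(k{+}2)$ equivalence. Your proof sketch is in fact more explicit than the paper's on two points the paper treats as evident: (i) why $ab,cd$ are forced into every line geodetic set, and (ii) why an edge $e\in E(G)$ covered by a pair $e',e''\in S$ in $H_G$ is also covered by that pair in $G$ with $d_G(e',e'')\in\{2,3\}$. One minor remark: triangle-freeness is not actually needed in the length-$3$ distance-preservation step---if a shortest $H_G$-path $e'\!-\!f_1\!-\!f_2\!-\!e''$ has, say, $f_2=bv\in E'$, then $v$ lies in both $f_1$ and $e''$, giving a length-$2$ shortcut and a contradiction, regardless of triangles.
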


\begin{proof}
	We shall reduce the NP-complete problem of computing a good edge set of a triangle-free graph with minimum cardinality (Lemma~\ref{lem:goodset}). Let $G$ be a triangle-free graph. Construct the graph $H_G$ as stated above (just before Lemma~\ref{lem:line-1}). The set $E'$ is also defined as before. We shall show that a triangle-free graph $G$ has a good edge set of cardinality $k$ if and only if $H_G$ has a line geodetic set of cardinality $k+2$. 
	
	Let $P$ be a good edge set of $G$. Notice that, for each edge $e\in E(G)$, there are two edges $e',e''\in P$ such that $e$ belongs to a shortest path between $e'$ and $e''$ in $H_G$. Also any edge of $E'$ belongs to a shortest path between the edges $ab$ and $cd$ in $H_G$. Hence $P\cup \{ab,cd\}$ is a line geodetic set of $H_G$ with cardinality $k+2$.

	Let $Q$ be a line geodetic set of $H_G$ of size $k+2$. Notice that $\{ab,cd\}\subseteq Q$ and let $Q'=Q\setminus \{ab,cd\}$. Due to Lemma~\ref{lem:line-1}, we can assume that $Q'$ does not contain any edge of $E'$. Let $e$ be an edge in $E(G)\setminus Q'$ and let $e',e''\in Q'$ such that $e$ lies in some shortest path between $e'$ and $e''$ in $H_G$. Since the distance between $e'$ and $e''$ is at most three in $H_G$, it follows that $Q'$ is a good edge set of $G$ with cardinality $k$.  
\end{proof}

\subsection{Inapproximability on graphs with diameter $2$}\label{sec:approx-hard}
%
%
%

Given a graph $G$, a set $S\subseteq V(G)$ is a \emph{2-dominating set} of $G$ if any vertex $w\in V(G)\setminus S$ has at least two neighbours in $S$. The \textsc{2-MDS} problem is to compute a $2$-dominating set of graphs with minimum cardinality. We shall use the following result.

\begin{theorem}[\cite{chlebik2008,dinur2014}]\label{rslt:approx}
	Unless $P=NP$, there is no polynomial time $o(\log n)$-approximation algorithm for the \textsc{2-MDS} problem on triangle-free graphs. 
\end{theorem}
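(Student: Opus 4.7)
The plan is to compose two reductions. First, invoke the $(1-\varepsilon)\ln n$ inapproximability of \textsc{Minimum Dominating Set} on bipartite (hence triangle-free) graphs, which follows via a standard approximation-preserving reduction (incidence graph plus forcing pendants on the set side) from the \textsc{Set Cover} hardness of Dinur and Steurer~\cite{dinur2014}. Second, I would convert this $1$-domination hardness into $2$-domination hardness via a uniform pendant-attachment gadget that preserves triangle-freeness.

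Concretely, given a triangle-free graph $G$ on $n$ vertices, construct $G'$ from $G$ by attaching a degree-one pendant $\ell_v$ to each vertex $v\in V(G)$. A pendant cannot lie in a triangle, so $G'$ is triangle-free. Each $\ell_v$ is a leaf and is therefore forced into every $2$-dominating set $D$ of $G'$ (it has a unique neighbour and so cannot itself be $2$-dominated). Given that all $n$ pendants are in $D$, any vertex $v\in V(G)\setminus D$ already has one ``automatic'' neighbour ($\ell_v$) in $D$ and needs precisely one more from $N_G(v)$ to satisfy $2$-domination --- which is exactly the usual $1$-domination condition on $G$. Thus $D\cap V(G)$ is a dominating set of $G$, and conversely the union of any dominating set of $G$ with all $n$ pendants is a $2$-dominating set of $G'$. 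This yields the clean identity ${\rm OPT}_{\text{2-MDS}}(G')=n+{\rm OPT}_{\rm DS}(G)$.

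The main obstacle, which I expect to require the most care, is handling the additive $n$-term from the pendants so that the logarithmic gap of the starting problem survives all the way through. If ${\rm OPT}_{\rm DS}(G)=o(n/\log n)$, the additive $n$ dominates the $2$-MDS optimum and a trivial $O(1)$-factor solution would already look like an $o(\log n)$-approximation. The resolution is to invoke the bipartite \textsc{Dominating Set} hardness in the parameter regime ${\rm OPT}_{\rm DS}(G)=\Omega(n)$, which the Dinur-Steurer Set Cover construction provides after standard preprocessing (bounding set sizes, or equivalently restricting to instances where each set covers at most a constant fraction of the universe). In this regime $n+{\rm OPT}_{\rm DS}(G)=\Theta({\rm OPT}_{\rm DS}(G))$, and since $|V(G')|=2n$ gives $\log|V(G')|=\Theta(\log n)$, an $o(\log|V(G')|)$-approximation for \textsc{2-MDS} on $G'$ would, after subtracting the $n$ known forced pendants, produce an $o(\log n)$-approximation for bipartite \textsc{Dominating Set} --- contradicting the hardness obtained in the first step. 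The combinatorial content of the reduction is essentially free; the entire technical weight lies in maintaining the approximation gap through the additive pendant term.
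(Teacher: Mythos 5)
There is a genuine gap, and it is exactly at the point you flagged as ``the main obstacle.'' Your first step (Set Cover $\rightarrow$ bipartite \textsc{MDS} via the incidence graph with forcing pendants) is standard and fine, and the identity ${\rm OPT}_{\text{2-MDS}}(G')=n+{\rm OPT}_{\rm DS}(G)$ for the pendant-per-vertex graph $G'$ is correct. But the proposed rescue --- invoking \textsc{Dominating Set} hardness ``in the parameter regime ${\rm OPT}_{\rm DS}(G)=\Omega(n)$'' --- is vacuous: since $V(G)$ is always a dominating set, ${\rm OPT}_{\rm DS}(G)\leq n$, so in that regime outputting all of $V(G)$ is already a constant-factor approximation and no $\omega(1)$ gap (let alone a logarithmic one) can exist there. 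Bounding set sizes to a constant fraction of the universe does not force ${\rm OPT}=\Omega(n)$ either; it only forces ${\rm OPT}=\Omega(1)$. In fact the failure is intrinsic to the gadget, not to the choice of regime: if the source gap instances have yes-optimum $q$ and no-optimum $g\cdot q$ with $g=\Theta(\log n)$, then $g\cdot q\leq n$, so the transferred \textsc{2-MDS} gap is $(n+gq)/(n+q)\leq 2$. A pendant attached to every vertex always drowns the multiplicative gap in the additive $n$, so this composition can never yield $o(\log n)$-inapproximability for \textsc{2-MDS}.

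For comparison, the paper does not prove this statement at all: it is quoted from the literature \cite{chlebik2008,dinur2014}, where the $\ln n$-hardness of $k$-tuple domination is obtained by building the multiplicity requirement directly into a Set-Cover-type construction (so that the ``second'' domination is paid for multiplicatively, e.g.\ by duplicating the dominator side of a bipartite incidence structure, at cost a factor $2$ rather than an additive $n$), combined with the tight Set Cover hardness of Dinur--Steurer. If you want a self-contained proof, that is the direction to take; any gadget whose forced vertices number $\Theta(n)$ will run into the same cap on the gap as your pendant construction.
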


We observe the following.


\begin{lemma}\label{lem:2-dom}
	Let $G$ be a triangle-free graph and $G'$ be the graph obtained by adding an universal vertex $v$ to $G$. A set $S$ of vertices of $G'$ is a geodetic set if and only if $S\setminus \{v\}$ is a $2$-dominating set of $G$.
\end{lemma}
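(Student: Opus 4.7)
The plan is to extract the distance structure of $G'$ explicitly and then use triangle-freeness of $G$ to translate the geodetic condition into the $2$-domination condition. Because $v$ is universal, any two vertices of $V(G)$ that are non-adjacent in $G$ are at distance exactly $2$ in $G'$ (via $v$), and any two adjacent vertices are at distance $1$. Consequently, for $u,w \in V(G)$ the interval $I_{G'}(u,w)$ equals $\{u,w\}$ when $uw \in E(G)$, and equals $\{u,w,v\} \cup (N_G(u) \cap N_G(w))$ when $uw \notin E(G)$; moreover $I_{G'}(u,v)=\{u,v\}$ for every $u \in V(G)$. I would state these three identities as a short preliminary observation, since all subsequent reasoning reduces to case analysis on them.

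For the forward direction, I would assume $S$ is geodetic and pick any $x \in V(G) \setminus S$. Since $x$ must lie in some $I_{G'}(u,w)$ with $u,w \in S$, the preliminary observation forces $u,w \in V(G)$ (otherwise the interval is just a single edge containing $v$), and forces $uw \notin E(G)$ together with $x \in N_G(u) \cap N_G(w)$. Thus $x$ has two neighbours $u,w$ in $S \cap V(G) = S \setminus \{v\}$, which is exactly the $2$-domination condition.

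For the reverse direction, assume $S \setminus \{v\}$ is a $2$-dominating set of $G$. For any $x \in V(G) \setminus S$, pick two neighbours $u,w \in S \setminus \{v\}$; here the \emph{key use of triangle-freeness} is that $u$ and $w$ cannot be adjacent in $G$ (else $\{u,w,x\}$ would be a triangle), so $u$–$x$–$w$ is a shortest $u$–$w$ path in $G'$ and $x$ is covered. The only remaining vertex to cover is $v$ itself: if $v \in S$ it is trivially covered, and otherwise I would note that whenever some $x \in V(G) \setminus S$ exists, the two non-adjacent $u,w$ found above also put $v$ into $I_{G'}(u,w)$; the corner case $S = V(G)$ is handled by observing that any triangle-free graph on at least three vertices contains two non-adjacent vertices.

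I do not expect a serious obstacle; the main point to be careful about is verifying that \emph{no} unintended shortest paths appear in $G'$ — in particular that an edge $uw$ of $G$ gives $I_{G'}(u,w)=\{u,w\}$ (so such a pair contributes nothing beyond itself) — and using triangle-freeness in the reverse direction to guarantee that two common neighbours in $S$ are automatically at distance $2$, which is what allows the $2$-domination condition to translate cleanly into the geodetic one.
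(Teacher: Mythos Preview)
Your argument is correct and follows essentially the same route as the paper's proof: both directions hinge on the observation that in $G'$ every pair of non-adjacent vertices of $G$ is at distance~$2$, and triangle-freeness is invoked precisely to ensure that two common neighbours in $S\setminus\{v\}$ are themselves non-adjacent. Your write-up is in fact more careful than the paper's, which in the converse direction only concludes that $S'\cup\{v\}$ is geodetic and does not explicitly discuss covering $v$ when $v\notin S$; your treatment of that corner case (and the explicit assumption $|V(G)|\geq 3$ needed there) is a genuine improvement.
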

\begin{proof}
	Let $S$ be a geodetic set of $G'$. Observe that for any vertex $u\in V(G) \setminus S$ there must exist vertices $u_1,u_2\in S\setminus \{v\}$ such that $u \in I(u_1,u_2)$ and $u_1u_2\notin E(G)$. Hence, $S$ is a $2$-dominating set of $G$. Conversely, let $S'$ be any $2$-dominating set of $G$. For any two vertex $u\in V(G)\setminus S$ there exist $v,v'\in S$ such that $uv,uv'\in E(G)$. Since $G$ is triangle-free, $v$ and $v'$ are non-adjacent. Hence, $u\in I(v,v')$ and $S'\cup\{v\}$ is a geodetic set of $G'$.
\end{proof}

The proof of Theorem~\ref{thm:reduction-diam2} follows due to Lemma~\ref{lem:2-dom} and Theorem~\ref{rslt:approx}.

\section{Approximation algorithms}\label{sec:approx}

In Section~\ref{sec:approx-general} and Section~\ref{sec:approx-grid} we present approximation algorithms for the \textsc{MGS} problem on general graphs and solid grid graphs, respectively.

\subsection{General graphs}\label{sec:approx-general}


We will reduce the \textsc{Minimum Geodetic Set} problem to the \textsc{Minimum Rainbow Subgraph of Multigraph (MRSM)} problem. A subgraph $H$ of an edge colored multigraph $G$ is \emph{colorful} if $H$ contains at least one edge of each color. Given an edge colored multigraph $G$, the \textsc{MRSM} problem is to find a colorful subgraph of $G$ of minimum cardinality. The following is a consequence of a result due to Tirodkar and Vishwanathan~\cite{tirodkar2017}.

\begin{theorem}[\cite{tirodkar2017}]\label{rslt:rainbow}
	Given an edge colored multigraph $G$, there is a polynomial time $O(\sqrt[3]{n}\log n)$-approximation algorithm to solve the \textsc{MRSM} problem where $n=|V(G)|$.
\end{theorem}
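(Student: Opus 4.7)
The plan is to establish the $O(\sqrt[3]{n}\log n)$-approximation for \textsc{MRSM} via an LP-rounding algorithm with a threshold-based two-regime analysis. I would begin by writing the natural linear program with one variable $x_v \in [0,1]$ per vertex and one variable $y_e \in [0,1]$ per edge, minimising $\sum_v x_v$ subject to $y_e \le x_u$ and $y_e \le x_w$ for every edge $e = uw$, together with $\sum_{e \in E_c} y_e \ge 1$ for every colour class $c$. Let $(x^*,y^*)$ be an optimal fractional solution of value $L^* \le \mathrm{OPT}$, and fix the threshold $\tau = n^{1/3}$.

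The algorithm would then construct a vertex set $V'$ in three stages: (i) deterministically add every vertex $v$ with $x_v^* \ge 1/\tau$, which costs at most $\tau L^*$ vertices; (ii) include each remaining vertex $v$ independently with probability $\min(1,\, c\, x_v^* \log n)$ for a large enough constant $c$, contributing $O(L^* \log n)$ vertices in expectation; (iii) for every colour still uncovered after (i) and (ii), add both endpoints of an arbitrary witnessing edge. The total cost from (i) and (ii) is $O(n^{1/3} \log n)\cdot \mathrm{OPT}$, so the entire burden lies in controlling step (iii).

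For step (iii), the key structural claim is that with high probability the output of (i)+(ii) already covers every colour. I would split the fractional mass of each colour $c$ into two parts: either at least half of $\sum_{e\in E_c} y_e^*$ comes from edges with an endpoint of weight at least $1/\tau$, in which case stage (i) already covers $c$, or at least half comes from ``low-weight'' edges $uw$ with $x_u^*, x_w^* < 1/\tau$. In the latter case each such edge contributes $y_e^* < 1/\tau$, so there are at least $\tau/2$ witnessing low-weight edges, and a Chernoff-type estimate using the independence of the sampling in (ii) shows $c$ fails to be covered with probability at most $1/n^{\Omega(1)}$. A union bound over the (polynomially many, after identifying colours with the same vertex-pair support) colour classes makes stage (iii) vacuous with high probability, and standard conditioning or derandomisation removes the randomness.

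The main obstacle is the pairwise-sampling analysis: the LP captures only the linear inequality $y_e \le \min(x_u, x_w)$, whereas randomised rounding naturally delivers coverage proportional to the product $x_u^* x_w^*$, so these two quantities must be reconciled through the threshold. This is why $\tau = n^{1/3}$ is exactly the right choice --- it balances the deterministic $\tau L^*$ cost of stage (i) against the effective ``$1/\tau$ mass per edge'' lower bound that drives the Chernoff concentration in stage (ii). A secondary difficulty is ruling out a bad integrality gap between $L^*$ and $\mathrm{OPT}$ for instances where the multigraph has many parallel edges of distinct colours on the same pair of vertices, which I would handle by a preprocessing step that keeps, for each pair $\{u,w\}$, only one representative edge per colour.
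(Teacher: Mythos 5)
First, a point of reference: the paper does not actually prove this statement. Theorem~\ref{rslt:rainbow} is quoted from Tirodkar and Vishwanathan~\cite{tirodkar2017}, and the only original content here is the one-line remark that their proof, written for simple graphs, carries over to multigraphs. So your proposal must stand on its own, and it does not: the LP you start from is too weak for the claimed bound. The natural relaxation with $y_e\le x_u$, $y_e\le x_w$ and $\sum_{e\in E_c}y_e\ge 1$ has integrality gap $\Omega(\sqrt{n})$. Take $n$ vertices and $p=n$ colours, each colour class being a (random) perfect matching of the vertex set; putting $x_v=2/n$ on every vertex and $y_e=2/n$ on every edge is feasible with objective value $2$, yet with high probability over the choice of matchings every vertex set of size $\sqrt{n}$ contains no edge of some colour (the expected number of edges of a random matching inside a set $S$ is about $|S|^2/2n$, so a fixed colour is missed with constant probability and a union bound over the $\binom{n}{\sqrt n}$ candidate sets finishes the argument). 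Hence $\mathrm{OPT}=\Omega(\sqrt n)$ while $L^*=2$, and no analysis that charges the algorithm's cost to $L^*$ can establish a ratio of $O(\sqrt[3]{n}\log n)$.

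This structural problem surfaces exactly where you place the ``key structural claim''. If half the mass of colour $c$ sits on edges with one endpoint of weight at least $1/\tau$, stage (i) buys only that endpoint, while the colour is covered only when \emph{both} endpoints of some edge are selected, so stage (i) does not cover $c$. For the low-weight half, the probability that a specific edge $uw$ is picked up in stage (ii) is $\Theta(x_u^*x_w^*\log^2 n)$, a product that the LP does not bound from below in terms of $y_e^*$: in the matching instance above all weights are $2/n\ll 1/\tau$, there are $n/2\ge\tau/2$ witnessing edges, and yet the colour survives stages (i)+(ii) uncovered with probability $1-O(\log^2 n/n)$, not $n^{-\Omega(1)}$, so no Chernoff bound of the kind you invoke is available. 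Stage (iii) then charges two vertices per uncovered colour, a quantity unrelated to $\mathrm{OPT}$. You correctly identify the product-versus-minimum issue as the main obstacle, but the threshold $\tau=n^{1/3}$ does not reconcile it, and by the integrality gap no choice of threshold can. The argument of~\cite{tirodkar2017} does not go through this relaxation; if you want to reconstruct the theorem you will need their combinatorial route (or some strengthened relaxation), not a rounding of this LP.
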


We note that Tirodkar and Vishwanathan~\cite{tirodkar2017} proved the above theorem for simple graphs only, but the proof works for multigraphs as well.

Given a graph $G$ form an edge colored multigraph $H_G$ as follows. The vertex set of $H_G$ is the same as $G$. For each subset $\{u,v,w\}\subseteq V(G)$ such that $u$ lies in some shortest path between $v$ and $w$, add an edge in $H_G$ between $v$ and $w$ having the color $u$. Observe that, $G$ has a geodetic set of cardinality $k$ if and only if $H_G$ has a colorful subgraph with $k$ vertices. The proof of Theorem~\ref{thm:general-approx} follows from Theorem~\ref{rslt:rainbow}.

\subsection{Solid grid graphs}\label{sec:approx-grid}

In this section, we shall give a linear time $3$-approximation algorithm for the \textsc{MGS} problem on solid grid graphs. From now on $G$ shall denote a solid grid graph and $\mathcal{R}$ is a grid embedding of $G$ where every interior face has unit area. 

%
%
%

\begin{figure}[t]
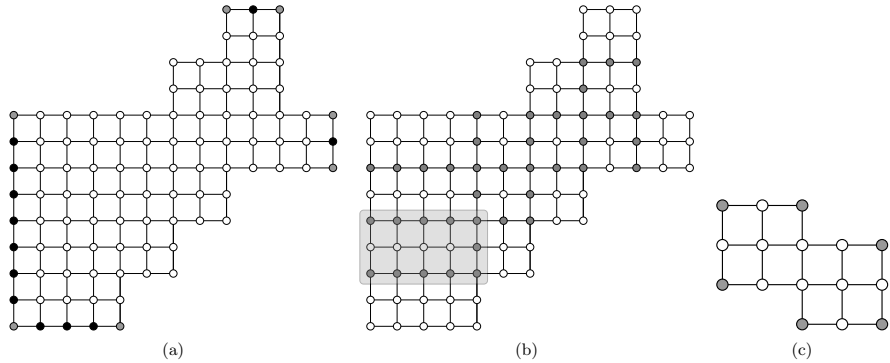

	\centering
	\scalebox{0.7}{
		\begin{tabular}{ccc}
			\includegraphics[page=3]{figures.pdf} & \includegraphics[page=4]{figures.pdf} & \includegraphics[page=5,scale=1.5]{figures.pdf} \\
			(a) & (b) & (c)
	\end{tabular}}
	\caption{(a) The black and gray vertices are the vertices of the corner paths. The gray vertices indicate the corner vertices. (b) The gray vertices are vertices of the red path. Vertices in the shaded box form a rectangular block. (c) Example of a solid grid graph whose number of corner vertices is exactly three times the geodetic number.} 
	\label{fig:grid}
\end{figure}

Let $G$ be a solid grid graph. A path $P$ of $G$ is a \emph{corner path} if (i) no vertex of $P$ is a cut vertex, (ii) both end-vertices of $P$ have degree $2$, and (iii) all vertices except the end-vertices of $P$ have degree $3$. See Figure~\ref{fig:grid}(a) for an example. Observe that for a corner path $P$, either the $x$-coordinates of all vertices of $P$ are the same or the $y$-coordinates of all vertices of $P$ are the same. Moreover, all vertices of a corner path lie in the outer face of $G$. The next observation follows from the definition of corner path and the fact that $G$ is a solid grid graph.

\begin{observation}\label{obs:neighbour-path}
	Let $P$ be a corner path of $G$. Consider the set $Q=\{v\in V(G)\colon v\notin V(P), N(v)\cap P\neq \emptyset\}$. Then $Q$ induces a path in $G$. Moreover, if the $x$-coordinates (resp. the $y$-coordinates) of all the vertices of $P$ are the same, then the $x$-coordinates (resp. the $y$-coordinates) of all vertices in $Q$ are the same.
\end{observation}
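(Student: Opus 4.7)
The plan is to use the axis-aligned structure of $P$ (established in the first part of the observation) to describe $Q$ explicitly as a shifted copy of $P$. Without loss of generality, assume all vertices of $P$ share $x$-coordinate $x_0$ and write $P = v_1 v_2 \cdots v_k$ with $v_i = (x_0, y_i)$, where $y_1 < y_2 < \cdots < y_k$ are consecutive integers.

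I would first establish that exactly one of the two horizontal sides of $P$, say the side $x = x_0 + 1$, contains no vertex of $G$ adjacent to any $v_i$. The justification is that $P$ lies on the boundary of the outer face of the embedding $\mathcal{R}$; a walk along the outer face from $v_1$ to $v_k$ traverses $P$ on one side, so the outer face is consistently on one side of $P$. If neighbours of $P$ existed on both sides, then the outer face would have to cross $P$ at some $v_i$, which would force $v_i$ to be a cut vertex, contradicting condition~(i) of the corner-path definition.

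Next, an internal vertex $v_i$ has degree $3$ in $G$, with two of its grid-neighbours (namely $v_{i-1}$ and $v_{i+1}$) already on $P$; the unique remaining neighbour is thus forced to be $(x_0 - 1, y_i)$. For the endpoint $v_1$ (degree $2$, one $P$-neighbour $v_2$), the extra neighbour is either the horizontal position $(x_0 - 1, y_1)$ or the vertical extension $(x_0, y_1 - 1)$. I would rule out the vertical extension as follows: if $(x_0, y_1 - 1)$ were the extra neighbour of $v_1$, then $v_1$ would have no grid-neighbour at $(x_0 \pm 1, y_1)$, so in particular $(x_0 - 1, y_1) \notin V(G)$. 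The interior face of $\mathcal{R}$ adjacent to the edge $v_1 v_2$ on the interior side would then fail to be a unit square (its corner at $(x_0 - 1, y_1)$ is missing), contradicting solidity; if that face were instead the outer face, then $v_1 v_2$ would be a bridge, so $v_2 \in V(P)$ would be a cut vertex, again contradicting condition~(i). Symmetric reasoning applies to $v_k$.

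Combining these observations yields $Q = \{(x_0 - 1, y_i) \colon 1 \le i \le k\}$. Two consecutive members of $Q$ lie at grid-distance $1$ and hence are adjacent in $G$, so $Q$ induces a path, and all its vertices share the $x$-coordinate $x_0 - 1$, as required. The main obstacle I expect is the first step: rigorously deducing from the outer-face placement of $P$ and the absence of cut vertices on $P$ that all non-path neighbours of $P$ lie on a single side of $P$; the remaining case analysis for the endpoints is delicate but mechanical once the one-sided exterior property is in hand.
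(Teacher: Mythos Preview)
The paper does not actually prove this observation; it simply asserts that it ``follows from the definition of corner path and the fact that $G$ is a solid grid graph.'' Your proposal therefore supplies exactly the details the paper omits, and the approach you take is the natural one: exploit the degree constraints on $P$ together with the unit-square face condition of a solid grid embedding to pin down every non-$P$ neighbour as lying on a single shifted vertical line.

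Your argument is essentially correct. The one place where you are right to be cautious is Step~1 (showing all non-$P$ neighbours lie on one side of the line $x=x_0$), but this can be made rigorous by the same solidity device you already use for the endpoints: if two consecutive internal vertices $v_i,v_{i+1}$ had their third neighbours on opposite sides of $x=x_0$, then both unit squares incident to the edge $v_iv_{i+1}$ would be missing a corner, forcing that edge to be a bridge and $v_i$ a cut vertex, contradicting condition~(i) of the corner-path definition. Once Step~1 is secured this way, your endpoint analysis and the conclusion $Q=\{(x_0-1,y_i):1\le i\le k\}$ go through as you describe. One small omission: you implicitly assume $k\ge 2$ when you speak of ``the $P$-neighbour $v_2$'' of $v_1$; the single-vertex case is degenerate and arguably excluded by context, but it would be tidy to say so explicitly.
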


\noindent We shall use Observation~\ref{obs:neighbour-path} to prove a lower bound on the geodetic number of $G$ in terms of the number of corner paths of $G$.

\begin{lemma}\label{lem:cardinality}
	Any geodetic set of $G$ contains at least one vertex from each corner path.
\end{lemma}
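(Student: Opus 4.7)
The plan is to prove the stronger claim that no shortest path in $G$ between two vertices outside $V(P)$ contains any vertex of $V(P)$. Once that is established, the lemma is immediate: any set $S$ with $S\cap V(P)=\emptyset$ would satisfy $\bigcup_{u,u'\in S}I(u,u')\subseteq V(G)\setminus V(P)$, leaving every vertex of $V(P)$ uncovered, so $S$ could not be a geodetic set.

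To set up notation, I would write $P=p_1p_2\cdots p_k$ and let $q_i$ denote the unique out-of-$P$ neighbour of $p_i$, which exists by the degree constraints defining a corner path. By Observation~\ref{obs:neighbour-path} the $q_i$ are pairwise distinct and form a path $q_1q_2\cdots q_k$ on a line parallel to $P$, so in particular each square with corners $p_i,p_{i+1},q_{i+1},q_i$ is an interior face of the embedding. Because $G$ is a solid grid graph whose interior faces have unit area, this face is a unit square and the edge $q_iq_{i+1}$ belongs to $G$ for every $1\le i<k$; this is what makes $Q$ a usable detour around $P$.

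The core of the argument is a contradiction. Suppose $\pi$ is a shortest $u$--$u'$ path with $u,u'\notin V(P)$ that contains some vertex of $V(P)$, and let $p_a,p_b$ be the first and last vertex of $V(P)$ along $\pi$, WLOG $a\le b$. The only out-of-$P$ neighbour of any $p_i$ being $q_i$ forces the predecessor of $p_a$ on $\pi$ to be $q_a$ and the successor of $p_b$ on $\pi$ to be $q_b$; the degenerate case $a=b$ is ruled out because it would require $q_a$ to be both predecessor and successor of $p_a$, violating simplicity of $\pi$. Using that subpaths of shortest paths are themselves shortest paths, I would derive
\[
|\pi|=d(u,q_a)+L+d(q_b,u')+2,
\]
where $L$ is the length of the $p_a$--$p_b$ subpath of $\pi$. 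Concatenating a shortest $u$--$q_a$ path, the $Q$-segment $q_aq_{a+1}\cdots q_b$, and a shortest $q_b$--$u'$ path then produces a valid walk from $u$ to $u'$ of length $d(u,q_a)+(b-a)+d(q_b,u')$.

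The step I expect to be the main obstacle is establishing the lower bound $L\ge b-a$, since without it the competing walk could fail to be strictly shorter than $\pi$. This is precisely where the solid grid structure is indispensable: in any grid graph every walk between two vertices has length at least their Manhattan distance (each edge alters one coordinate by $\pm 1$), and $p_a,p_b$ being collinear in the embedding makes their Manhattan distance exactly $b-a$. Combining $L\ge b-a$ with the displayed formula yields $|\pi|\ge d(u,q_a)+(b-a)+d(q_b,u')+2$, strictly exceeding the length of the competing walk, contradicting that $\pi$ is a shortest $u$--$u'$ path and completing the proof.
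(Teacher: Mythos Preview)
Your proof is correct and uses the same core idea as the paper: replace the portion of a putative shortest path that runs along $P$ by the parallel path $Q$ furnished by Observation~\ref{obs:neighbour-path}, and observe that this saves at least two steps. The paper only shows that an \emph{end}-vertex of $P$ cannot lie on a shortest path between two vertices outside $P$ (which already suffices for the lemma), whereas you prove the stronger statement for every vertex of $P$; your Manhattan-distance lower bound $L\ge b-a$ is the clean replacement for the paper's explicit rewriting of the path, and it makes the ``multiple entries into $P$'' issue disappear automatically.
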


\begin{proof}
	\sloppy Without loss of generality, we assume the $x$-coordinates of all vertices of $P$ are the same. By Observation~\ref{obs:neighbour-path}, the set $\{v\in V(G)\colon v\notin V(P), N(v)\cap P\neq \emptyset\}$ induces a path $Q$ and the $x$-coordinates of all vertices in $Q$ are the same. Now consider any two vertices $a,b\in V(G)\setminus V(P)$ and with a path $P'$ between $a$ and $b$ that contains one of the end-vertices, say $u$, of $P$. Observe that $P'$ can be expressed as $P'=a~c_1~c_2\ldots c_t~d~f_1~f_2\ldots f_{t'}~u~g~h_1~h_2~\ldots h_{t''}~b$ such that $\{d,g\}\subseteq V(Q)$ and $\{f_1,f_2,\ldots,f_{t'}\}\subseteq V(P)$. Then there is a path $P''=a~c_1~c_2\ldots c_t~d~f'_2\ldots f'_{t'}~g~h_1~h_2~\ldots h_{t''}~b$ where for $2\leq i\leq t'$, $f'_i$ is the vertex in $Q$ which is adjacent to $f_i$ in $G$. Observe that the length of $P''$ is strictly less than that of $P'$. Therefore $u\notin I(a,b)$ whenever $a,b\in V(G)\setminus V(P)$. Hence any geodetic set of $G$ contains at least one vertex from $P$.
\end{proof}

Any geodetic set of $G$ contains all vertices of degree $1$. Inspired by the above fact and Lemma~\ref{lem:cardinality}, we define the term \emph{corner vertex} as follows. A vertex $v$ of $G$ is a \emph{corner vertex} if $v$ has degree $1$ or $v$ is an end-vertex of some corner path. See Figure~\ref{fig:grid}(a) for an example. Observe that two corner paths may have at most one corner vertex in common. Moreover, a corner vertex cannot be in three corner paths. Therefore it follows that the cardinality of the set of corner vertices is at most $3\cdot g(G)$. 

\begin{remark}
	Note that there are solid grid graphs whose number of corner vertices is exactly three times the geodetic number. See Figure~\ref{fig:grid}(c) for one such example.
\end{remark}

Now we prove that the set of all corner vertices of $G$ is indeed a geodetic set of $G$. We shall use the following proposition of Ekim and Erey~\cite{ekim2014}.

\begin{theorem}[\cite{ekim2014}]\label{rslt:cut-vertex-geod}
	Let $F$ be a graph and $F_1,\ldots,F_k$ its biconnected components. Let $C$ be the set of cut vertices of $G$. If $X_i\subseteq V(F_i)$ is a minimum set such that $X_i\cup (V(F_i)\cap C)$ is a minimum geodetic set of $F_i$ then $\cup_{i=1}^{k} X_i$ is a minimum godetic set of $F$.
\end{theorem}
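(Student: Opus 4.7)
The plan is to exploit the elementary but crucial observation that for any two vertices $a,b$ lying in the same biconnected component $F_i$, every shortest $a$--$b$ path in $F$ stays inside $F_i$, and in particular $d_F(a,b)=d_{F_i}(a,b)$. This follows because any detour of such a path out of $F_i$ would have to re-enter through the same cut vertex, producing a strictly longer walk. The proof then has two directions: a lower bound $|S|\ge \sum_i |X_i|$ for every geodetic set $S$ of $F$, and a construction showing that $\bigcup_i X_i$ is in fact a geodetic set.

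For the lower bound, I would first prove the following restriction lemma: for any geodetic set $S$ of $F$, the set $S_i:=(S\cap V(F_i))\cup (V(F_i)\cap C)$ is a geodetic set of $F_i$. Indeed, take $v\in V(F_i)\setminus S_i$; choose $a,b\in S$ and a shortest $a$--$b$ path $P$ in $F$ through $v$. Walk along $P$ from $v$ toward $a$ and stop at the first vertex that lies in $S_i$: this is either $a$ itself (if $a\in V(F_i)$) or the cut vertex at which $P$ first exits $F_i$. Do the same toward $b$; by the elementary observation above the resulting subpath is a shortest path in $F_i$, so $v$ is covered in $F_i$ by two vertices of $S_i$. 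Consequently $(S\cap V(F_i))\setminus C$ is a candidate for $X_i$, so $|(S\cap V(F_i))\setminus C|\ge |X_i|$. Since non-cut vertices lie in a unique block, summing gives $|S|\ge |S\setminus C|=\sum_i |(S\cap V(F_i))\setminus C|\ge \sum_i |X_i|\ge |\bigcup_i X_i|$.

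For the upper bound, I would show that $\bigcup_i X_i$ covers every vertex $v\in V(F)$. Let $F_i$ be a block containing $v$. By hypothesis, $X_i\cup (V(F_i)\cap C)$ is a geodetic set of $F_i$, hence $v$ lies on a shortest $F_i$-path between some $u,w\in X_i\cup (V(F_i)\cap C)$. If $u,w\in X_i$, the same path is shortest in $F$ and we are done. Otherwise, suppose $w\in V(F_i)\cap C$. The plan is to extend $w$ to a vertex $w^*\in \bigcup_{j\ne i} X_j$ lying on the side of $w$ opposite to $F_i$ in the block-cut tree, so that the concatenation of the $u$--$w$ path in $F_i$ with any shortest $w$--$w^*$ path in $F$ is still a shortest path in $F$ through $v$ (here I use that every $w$--$w^*$ walk must pass through $w$). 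Such a $w^*$ exists by descending the block-cut tree from $w$ to a leaf block $F_\ell$ and observing that $X_\ell\neq\emptyset$: a leaf block contains exactly one cut vertex $c$, and if $X_\ell=\emptyset$ then $\{c\}$ would have to be a geodetic set of $F_\ell$, which is impossible since $F_\ell$ has at least two vertices. The symmetric argument is applied to $u$ if needed, and the case $v\in V(F_i)\cap C$ itself is handled by the same block-cut-tree extension on both sides of $v$.

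The main obstacle, as usual for reductions to biconnected components, is the bookkeeping around cut vertices: verifying that when $u$ or $w$ is replaced by a vertex in another block, the resulting path remains a shortest $F$-path through $v$. This hinges exclusively on the fact that removing a cut vertex disconnects $F$, so that any walk between the two sides of the cut vertex is forced through it, making the concatenation of two shortest paths (one in $F_i$, one reaching $w^*$) automatically a shortest path in $F$. Once this extension lemma is carefully stated, the two inequalities combine to show that $\bigcup_i X_i$ is a minimum geodetic set of $F$.
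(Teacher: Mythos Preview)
The paper does not prove this theorem: it is quoted from Ekim and Erey~\cite{ekim2014} and used as a black box. So there is no ``paper's own proof'' to compare against.

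Your argument is correct and is the standard one. Two small points worth making explicit. First, in the lower bound you silently use $X_i\cap C=\emptyset$ to get disjointness of the $X_i$'s and hence $\sum_i|X_i|=|\bigcup_i X_i|$; this is fine, since by minimality of $X_i$ any cut vertex in $X_i$ could be dropped without changing $X_i\cup(V(F_i)\cap C)$, but it should be said. Second, in the upper bound, when both endpoints $u,w$ of the covering path in $F_i$ are cut vertices, you should check that the two leaf blocks you descend to lie in \emph{different} components of the block--cut tree after removing $F_i$; this is automatic because $u\ne w$ are distinct cut vertices of $F_i$, so the subtrees hanging off them away from $F_i$ are disjoint. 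With these two remarks added, your proof is complete.
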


The next observation follows from Theorem~\ref{rslt:cut-vertex-geod}.

\begin{observation}\label{obs:block}
	Let $C(G)$ be the set of corner vertices of $G$ and $S$ be the set of cut vertices of $G$. Let $\{H_1,H_2,\ldots,H_t\}$ be the set of biconnected components of $G$. The set $C(G)$ is a geodetic set of $G$ if and only if $(C(G)\cap V(H_i))\cup (S\cap V(H_i))$ is a geodetic set of $H_i$ for all $1\leq i\leq t$.
\end{observation}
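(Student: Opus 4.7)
The plan is to derive Observation~\ref{obs:block} directly from Theorem~\ref{rslt:cut-vertex-geod}. The Ekim--Erey theorem, though phrased for minimum geodetic sets, is proved via the following set-level block-decomposition principle: for any connected graph $F$ with biconnected components $F_1,\ldots,F_k$ and cut-vertex set $C$, a set $T\subseteq V(F)$ is a geodetic set of $F$ if and only if $(T\cap V(F_i))\cup (C\cap V(F_i))$ is a geodetic set of each $F_i$. Observation~\ref{obs:block} is then the specialisation $F=G$, $T=C(G)$, $C=S$, so the real work is to establish this principle.

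For the forward direction of the principle, I would fix a block $F_i$ and a vertex $v\in V(F_i)$. Take a shortest $a$--$b$ path $P$ in $F$ with $a,b\in T$ that passes through $v$, and let $P'$ be the maximal sub-path of $P$ whose vertices lie in $V(F_i)$. The endpoints of $P'$ lie in $(T\cap V(F_i))\cup (C\cap V(F_i))$: they are either the original endpoints $a,b$ (if those lie in $F_i$) or the cut vertices at which $P$ enters and leaves $F_i$. Because shortest paths decompose additively across cut vertices, $P'$ is itself a shortest path in $F_i$ and witnesses that $v$ is covered.

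For the backward direction, any vertex $v\in V(F_i)$ is, by hypothesis, covered in $F_i$ by a shortest path between two members of $(T\cap V(F_i))\cup (C\cap V(F_i))$. Whenever an endpoint of this path is a cut vertex rather than an element of $T$, I would extend it through the block-cut tree by concatenating shortest paths inside successive blocks until reaching a $T$-vertex; additivity of $d_F(x,y)=d_F(x,c)+d_F(c,y)$ whenever a cut vertex $c$ separates $x$ from $y$ keeps the concatenation a shortest path in $F$.

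The main obstacle I anticipate is confirming that these extensions always terminate at a $T$-vertex. This follows by induction on the depth of the block-cut tree: every leaf block $F_j$ reached has exactly one cut vertex of $F$, so the hypothesised geodetic set $(T\cap V(F_j))\cup(C\cap V(F_j))$ of $F_j$ is forced to contain at least one vertex of $T$, since a single cut vertex cannot be a geodetic set of a non-trivial biconnected block. Once both endpoints have been extended to vertices of $T$, the resulting shortest path in $F$ certifies that $v$ is covered by $T$, completing the proof of the principle and hence of the observation.
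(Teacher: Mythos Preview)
Your approach is the same as the paper's: the paper simply records that the observation ``follows from Theorem~\ref{rslt:cut-vertex-geod}'' without further argument, and you derive it from the same source. You go further by correctly noting that Theorem~\ref{rslt:cut-vertex-geod} is phrased for \emph{minimum} geodetic sets and then supplying a sound proof of the underlying set-level block-decomposition principle that the paper tacitly relies on; this extra care is justified and your extension argument through the block--cut tree is correct.
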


From now on, $C(G)$ is the set of corner vertices of $G$ and $H_1,H_2,\ldots,H_t$ are the biconnected components of $G$. Due to Theorem~\ref{rslt:cut-vertex-geod} and Observation~\ref{obs:block}, it is enough to show that for each $1\leq i \leq t$, the set $(C(G)\cap V(H_i))\cup (S\cap V(H_i))$ is a geodetic set of $H_i$. First, we introduce some more notations and definitions below. 

\medskip

Let $H$ be a biconnected component of $G$. Recall that each vertex of $H$ is a pair of integers and each edge is a line segment with unit length. 
An edge $e\in E(H)$ is an \emph{interior} edge if all interior points of $e$ lie in an interior face of $H$. For a vertex $v\in V(H)$, let $P_v$ denote the maximal path such that all edges of $P_v$ are interior edges and each vertex in $P_v$ has the same $x$-coordinate as $v$. Similarly, let $P'_v$ denote the maximal path such that all edges of $P_v$ are interior edges and each vertex in $P'_v$ has the same $y$-coordinate as $v$. A path $P$ of $H$ is a \emph{red path} if (i) there exists a $v\in V(H)$ such that $P\in \{P_v,P'_v\}$ and (ii) at least one end-vertex of $P$ is a cut-vertex or a vertex of degree $4$. A vertex $v$ of $H$ is \emph{red} if $v$ lies on some red path. See Figure~\ref{fig:grid}(b) for an example. 

\begin{definition}
	A subgraph $F$ of $H$ is a rectangular block if $F$ satisfies the following properties.
	
	\begin{enumerate}
		\item For any two vertices $(a_1,b_1),(a_2,b_2)$ of $F$, we have that any pair $(a_3,b_3)$ with $a_1\leq a_3\leq a_2$ and $b_1\leq b_3\leq b_2$ is a vertex of $F$.
		
		\item\label{it:prop-2} Let $a,a'$ be the maximum and minimum $x$-coordinates of the vertices in $F$. The $x$-coordinate of any red vertex of $F$ must be equal to $a$ or $a'$. Similarly, let $b,b'$ be the maximum and minimum $y$-coordinates of the vertices in $F$. The $y$-coordinate of any red vertex of $F$ must be equal to $b$ or $b'$. 
		
	\end{enumerate}
\end{definition}

Observe that $H$ can be decomposed into rectangular blocks such that each non-red vertex belongs to exactly one rectangular block. See Figure~\ref{fig:grid}(b) for an example. Let $B_1,B_2,\ldots,B_k$ be a decomposition of $H$ into rectangular blocks. Recall that $C(G)$ is the set of corner vertices of $G$ and $S$ is the set of cut vertices of $G$. We have the following lemma.

\begin{lemma}\label{lem:block-cover}
	For each $1\leq i\leq k$, there are two vertices $x_i,y_i\in (C(G)\cap V(H))\cup (S\cap V(H))$ such that $V(B_i)\subseteq I(x_i,y_i)$.
\end{lemma}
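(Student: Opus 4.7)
The plan is to choose $x_i$ and $y_i$ as two diagonally opposite corners of an axis-aligned rectangle $R_i \subseteq V(H)$ containing $V(B_i)$, where both $x_i$ and $y_i$ belong to $(C(G)\cap V(H))\cup(S\cap V(H))$. Once such an $R_i$ is found, the conclusion will follow from a standard shortest-path observation in solid grid regions: if every grid point of an axis-aligned rectangle $[c_1,c_2]\times[d_1,d_2]$ is a vertex of a solid grid graph, then the distance between the opposite corners $(c_1,d_1)$ and $(c_2,d_2)$ equals the Manhattan distance $(c_2-c_1)+(d_2-d_1)$, and every grid point of the rectangle lies on a monotone staircase shortest path between those corners. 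Applied to $R_i$, this yields $V(B_i)\subseteq V(R_i)\subseteq I(x_i,y_i)$, as required.

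To construct $R_i$, I would begin at the bottom-left corner $(a_1,b_1)$ of $B_i=[a_1,a_2]\times[b_1,b_2]$ and walk outward in the down/left direction along the grid inside $H$ until landing on a vertex $x_i\in(C(G)\cap V(H))\cup(S\cap V(H))$; symmetrically, the walk starting from $(a_2,b_2)$ in the up/right direction produces $y_i$. Two facts drive the construction. First, the maximality of $B_i$ as a rectangular block (properties~1 and~2) together with the solidity of $H$ ensure that the whole rectangle ``swept'' by these two extensions stays inside $V(H)$, so $R_i\subseteq V(H)$. Second, along the outer face of $H$, every vertex of degree~$2$ is either a corner vertex of $G$ or a cut vertex of $G$, and any maximal run of degree-$3$ vertices along this boundary is (by definition) a subpath of a corner path of $G$ that terminates at such a vertex; hence the outward walk must terminate inside $(C(G)\cap V(H))\cup(S\cap V(H))$.

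The main technical obstacle is handling the case in which the corner $(a_1,b_1)$ (or $(a_2,b_2)$) has degree~$3$ or~$4$ in $H$, that is, when it already lies on a red path or is a meeting point of several rectangular blocks. In that case a naive down/left step would leave $V(H)$ immediately, because the block is flush against an interior red path or an adjacent block. Instead, the walk must first traverse the red path (equivalently, the boundary of a neighbouring block) and only afterwards turn to extend the Manhattan rectangle outward. Here the structural fact that the endpoints of any red path are cut vertices of $G$ or degree-$4$ vertices of $H$, and that degree-$4$ vertices are precisely the branching points between blocks, is what guarantees both progress and eventual termination at an element of $(C(G)\cap V(H))\cup(S\cap V(H))$. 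The proof concludes with a case analysis on the degree and red-status of the two chosen corners of $B_i$ in $H$, checking in each case that $x_i$, $y_i$, and $R_i$ can be produced as above.
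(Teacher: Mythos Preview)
Your intuition---extend from two opposite corners of $B_i$ to reach corner/cut vertices $x_i,y_i$ and then use monotone staircase paths---matches the paper's, but the specific mechanism you propose is too strong and does not go through as stated.

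The gap is your insistence on an axis-aligned rectangle $R_i$ with $V(B_i)\subseteq R_i\subseteq V(H)$ whose two diagonal corners are in $(C(G)\cup S)\cap V(H)$. Such an $R_i$ need not exist. Take $H$ to be an L-shaped (or staircase-shaped) solid grid and $B_i$ a block in one arm: the only available corner vertices in the required diagonal directions may span a rectangle whose interior leaves $V(H)$ at the reflex corner. Your own ``main technical obstacle'' paragraph essentially concedes this: once the corner of $B_i$ has degree~$3$ or~$4$, the down/left (or up/right) walk must detour along red paths or neighbouring blocks, and after such a detour there is no reason the swept rectangle stays in $V(H)$. The closing sentence (``a case analysis on the degree and red-status of the two chosen corners'') does not indicate how that analysis would produce the rectangle. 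Also, your auxiliary claim that every maximal run of degree-$3$ boundary vertices is a subpath of a corner path is false: such a run can terminate at a degree-$4$ vertex, in which case it is not contained in any corner path.

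The paper avoids the rectangle requirement entirely. From each of the four corners of $B_i$ it extends along the two incident red paths to the exterior face, obtaining eight points $a,a',b,b',c,c',d,d'$ there. These cut the outer boundary into arcs, and the proof is a short case analysis on which opposite pair of arcs contains a corner or cut vertex; when an arc (say $P_{a'a}$) contains none, its shape forces a corner vertex on an adjacent arc with a prescribed coordinate. The resulting $f,f'$ need not be diagonal corners of any rectangle inside $V(H)$; what is used instead is that the red paths connecting the corners of $B_i$ to the outer face, together with the outer-face arcs, give shortest paths from $f$ and $f'$ through $sw(B_i)$ and $ne(B_i)$ (or $nw(B_i)$ and $se(B_i)$), which suffices for $V(B_i)\subseteq I(f,f')$.
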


\begin{proof}
Let $X\in \{B_1,B_2,\ldots,B_k\}$ be an arbitrary rectangular block. A vertex $v$ of $X$ is a \emph{northern} vertex if the $y$-coordinate of $v$ is maximum among all vertices of $X$. Analogously, \emph{western} vertices, \emph{eastern} vertices and \emph{southern} vertices are defined. A vertex of $X$ is a boundary vertex if it is either northern, western, southern or an eastern vertex of $X$. Let $nw(X)$ be the vertex of $X$ which is both a northern vertex and a western vertex. Similarly, $ne(X)$ denotes the vertex which is both northern vertex and eastern vertex, $sw(X)$ denotes the vertex of $X$ which is both southern and western vertex and $se(X)$ denotes the vertex of $X$ which is both southern and eastern vertex. 

%

First we prove the lemma assuming that all boundary vertices of $X$ are red vertices. Let $a$ (resp. $b$) denote the vertex with minimum $y$-coordinate such that $P_a$ (resp. $P_b$) contains $sw(X)$ (resp. $se(X)$). Similarly, let $c$ (resp. $d$) denote the vertex with maximum $y$-coordinate such that $P_c$ (resp. $P_d$) contains $nw(X)$ (resp. $ne(X)$). Let $a'$ (resp. $c'$) denote the vertex with minimum $x$-coordinate such that $P'_{a'}$ (resp. $P'_{b'}$) contains $sw(X)$ (resp. $nw(X)$). Let $b'$ (resp. $d'$) denote the vertex with maximum $x$-coordinate such that $P'_{b'}$ (resp. $P'_{d'}$) contains $se(X)$ (resp. $ne(X)$). Observe that the vertices $a',a,b,b',d',d,c,c'$ lie on the exterior face of the embedding. 

For two vertices $i,j\in\{a',a,b,b',d',d,c,c'\}$, let $P_{ij}$ denote the path between $i,j$ that can be obtained by traversing the exterior face of the embedding in the counter-clockwise direction starting from $i$. Observe that, if both $P_{a'a}$ and $P_{d'd}$ (resp. $P_{bb'}$ and $P_{cc'}$) contain a corner or cut vertex each, say $f,f'$, then $\left\{sw(X),ne(X)\right\}\subseteq I(f,f')$ (resp. $\left\{nw(X),se(X)\right\}\subseteq I(f,f')$) and therefore $V(X)\subseteq I(f,f')$. Now consider the case when at least one of the paths in $\{P_{a'a},P_{d'd}\}$ does not contain any corner vertex or cut vertex and when at least one of the paths in $\{P_{b'b},P_{cc'}\}$ does not contain any corner vertex or cut vertex. Due to symmetry of rotation and reflection on grids, without loss of generality we can assume that both $P_{a'a}$ and $P_{bb'}$ have no corner vertex or cut vertex. Observe that in this case there must be a corner vertex $f$ in $P_{ab}$ whose $x$-coordinate is the same as that of $b$ and therefore of $se(X)$. If $P_{cc'}$ contains a corner vertex $f'$, then $\left\{nw(X),se(X)\right\}\subseteq I(f,f')$) and therefore $V(X)\subseteq I(f,f')$. Otherwise, there must be a corner vertex $f'$ in $P_{c'a'}$ whose $y$-coordinate is the same as that of $c'$ and therefore of $nw(X)$. Hence we have $\left\{nw(X),se(X)\right\}\subseteq I(f,f')$ and therefore $V(X)\subseteq I(f,f')$ in this case also.

Now we consider the case when there are some non-red boundary vertices of $X$. Let $v$ be a non-red vertex of $X$. Without loss of generality, we can assume that $v$ is a western vertex of $X$. Now we redefine the vertices $a,a',b,b',c,c',d,d'$ as follows. Let $a'=sw(X)$, $c'=nw(X)$ and $a$ (resp. $b$) be the vertex with minimum $y$-coordinate such that there is a path from $a$ to $sw(X)$ (resp. from $b$ to $se(X)$) containing vertices with the same $x$-coordinate as that of $sw(X)$ (resp. $se(X)$). Similarly, let $c$ (resp. $d$) be the vertex with maximum $y$-coordinate such that there is a path from $c$ to $nw(X)$ (resp. from $d$ to $ne(X)$) containing vertices with the same $x$-coordinate as that of $nw(X)$ (resp. $ne(X)$). Finally, let $d'$ (resp. $b'$) be the vertex with maximum $x$-coordinate such that there is a path from $d'$ to $ne(X)$ (resp. from $b'$ to $se(X)$) containing vertices with the same $y$-coordinate as that of $ne(X)$ (resp. $se(X)$). Using similar arguments on the paths $P_{ij}$ with $i,j\in\{a',a,b,b',d',d,c,c'\}$ as before, we can show that there exists corner vertices $f,f'$ such that $V(X)\subseteq I(f,f')$. Thus we have the proof.
\end{proof}

By Observation~\ref{obs:block} and Lemma~\ref{lem:block-cover}, $C(G)$ is a geodetic set of $G$. 

\medskip \noindent\textbf{Time complexity:} If the grid embedding of $G$ is given as part of the input, then the set of corner vertices can be computed in $O(|V(G)|)$ time by simply traversing the exterior face of the embedding. Otherwise, the set of corner vertices can be computed in $O(|V(G)|)$ time as follows (we shall only describe the procedure to find corner vertices of degree two as the other case is trivial). Let $H$ be a biconnected component of $G$, $v$ be a vertex of $H$ having degree $2$ and $u_0,x_0$ be its neighbours. If both $u_0$ and $x_0$ have degree $4$, then $v$ is not a corner vertex. Moreover, if at least one of $u_0$ and $x_0$ have degree $2$ then $v$ is a corner vertex. Otherwise, apply the following procedure. Assume $u_0$ has degree $3$ and denote $v$ as $u_{-1}$ for technical reasons. Set $i=0$. As $H$ is a biconnected solid grid graph, $u_i$ and $x_i$ must have exactly one common neighbour which is different from $u_{i-1}$. Denote this vertex as $x_{i+1}$. Let $u_{i+1}$ be the neighbour of $u_i$ different from both $x_{i+1}$ and $u_{i-1}$. If $deg_H(u_{i+1})=4$ or $u_{i+1}$ is a cut vertex in $G$ then terminate. If $deg_G(u_{i+1})=2$ then $v$ is a corner vertex. Otherwise, set $i=i+1$ and repeat the above steps. Observe that, when the above procedure terminates either we know that $v$ is a corner vertex or there is no corner path that contains both $u_0$ and $v$. Now swapping roles of $u_0$ and $x_0$ in the above procedure, we can decide if $v$ is a corner vertex. We can find all the corner vertices of $H$ by applying the above procedure to all vertices of degree $2$ of $H$. Similarly by applying the above procedure to all the biconnected components of $G$, we can find all corner vertices. Notice that, the total running time of the algorithm remains linear in the number of vertices of $G$.

\medskip

This completes the proof of Theorem~\ref{thm:grid-approx}.

\section{Conclusion}\label{sec:conclude}

In this paper, we studied the computational complexity of the \textsc{MGS} problem in various graph classes. We proved that the \textsc{MGS} problem remains NP-hard on planar graphs and line graphs. This motivates the following question.

\begin{question}
	Are there constant factor approximation algorithms for the \textsc{MGS} problem on planar graphs and line graphs?
\end{question} 

We gave an $O\left(\sqrt[3]{n}\log n\right)$-approximation algorithm for the \textsc{MGS} problem on general graphs and proved that unless P=NP, there is no polynomial time $o(\log n)$-approximation algorithm for the \textsc{MGS} problem even on graphs with diameter $2$. The following is a natural question in this direction.

\begin{question}
	Is there a $O(\log n)$-approximation algorithm for the \textsc{MGS} problem on general graphs ?
\end{question}




\bibliography{reference}

\end{document}